
\documentclass[10pt,journal]{IEEEtran}

%


%

%
\usepackage{cite}

  \usepackage{graphicx}
  \usepackage{subfigure}
  \usepackage{multirow}
  \usepackage{diagbox}
  \usepackage{booktabs}
  \usepackage{amsthm}
    \usepackage{mathtools}
  \usepackage{algorithm}
  \usepackage{algorithmic}

%
\usepackage{amsmath}
%

%
\usepackage{algorithmic}

%
\usepackage{array}
\usepackage{fixltx2e}

\usepackage{authblk}
\usepackage{amsfonts}
\newcommand{\figref}[1]{\figurename~\ref{#1}}
\usepackage[utf8]{inputenc}
\DeclareUnicodeCharacter{FB01}{fi}
\DeclareUnicodeCharacter{FB03}{ffi}
\newtheorem{remark}{Remark}
\newtheorem{lemma}{Lemma}
\newtheorem{theorem}{Theorem}
\newtheorem{proposition}{Proposition}

\hyphenation{op-tical net-works semi-conduc-tor}

\begin{document}
%
\title{ Joint Bandwidth Allocation and Path Selection in WANs
 with Path Cardinality Constraints
}
%
%
%

%

\author[1,2]{Jinxin~Wang}
\author[2]{Fan~Zhang}
\author[3]{Zhonglin~Xie}
\author[2]{Gong~Zhang}
\author[4]{Zaiwen~Wen}
\affil[1]{Center for Data Science, Peking University, Beijing, China}
\affil[2]{Theoretical Research Lab, Huawei Technologies, Hong Kong SAR, China}
\affil[3]{School of Mathematical Sciences, Peking University, Beijing, China}
\affil[4]{Beijing International Center for Mathematical Research, Peking University, Beijing, China}
\renewcommand\Authands{, and }

\markboth{
}%
{Shell \MakeLowercase{\textit{et al.}}: Bare Demo of IEEEtran.cls for IEEE Journals}
%



\maketitle

\begin{abstract}
In this paper, we study a joint bandwidth allocation and path selection problem via solving a multi-objective minimization problem under the path cardinality constraints, namely MOPC. Our problem formulation captures various types of objectives including the proportional fairness, the total completion time, as well as the worst-case link utilization ratio. Such an optimization problem is very challenging since it is highly non-convex. Almost all existing works deal with such a problem using relaxation techniques to transform it to be a convex optimization problem. However, we provide a novel solution framework based on the
classic alternating direction method of multipliers (ADMM) approach for solving this problem. Our proposed algorithm is simple and easy to be implemented. Each step of our algorithm consists of either finding
the maximal root of a single-cubic equation which is guaranteed to have at least one positive solution or solving a one-dimensional convex subproblem in a fixed interval. Under some mild assumptions, we prove that any limiting point of the generated sequence under our proposed algorithm is a stationary point. Extensive numerical simulations are performed to demonstrate the advantages of our algorithm compared with various baselines. 
\end{abstract}

\begin{IEEEkeywords}
Multi-objective bandwidth allocation, ADMM, non-convex optimization, cardinality constraints, network utility maximization.
\end{IEEEkeywords}

%
\IEEEpeerreviewmaketitle

\section{Introduction}
\IEEEPARstart{I}{n} wide area networks (WANs), network resource are shared by applications with different high-level performance objectives. For example, high-definition video applications require high throughput for the end users to enjoy video smoothly \cite{kimball2016low}. Online gaming applications require low latency for better user game experience \cite{lee2015outatime}. AR/VR-based applications such as video conferencing require both high throughput and low latency for data transmission \cite{elbamby2018toward}. Note that it is quite challenging to fulfill various performance objectives for these applications. Firstly, there are very limited available bandwidth resources in the network and it is hard to strike a balance across flows with different performance objectives. Secondly, some applications have the advantage of using more than one end-to-end path to deliver the traffic data. This design space of the number of paths also complicates the problem on how to distribute the allocated bandwidth across multiple paths for such flows. Thirdly, there are some underlying limitations on the number of paths an application could use. Such cardinality nature of path selection makes the problem even challenging. In this paper, we will try to address the aforementioned challenges by studying a multi-objective bandwidth allocation and path selection problem with path cardinality constraints.

There are many existing literatures discussing bandwidth allocation and path selection problems in WANs. In \cite{tian2017multi}, an approach called stacked congestion control is proposed to optimize multi-tenant multi-objective performance with a distributed host-based implementation. In \cite{wang2018joint}, a traffic engineering approach in software-defined WANs is proposed to proactively enforce forwarding policies by
coordinating the traffic demand of each data center instead of traditional passive control policies relying on given priorities. In \cite{perry2014fastpass}, a centralized
arbiter is used to decide when and at which path each packet should be sent. Even though these approaches demonstrate performance gains compared with baselines therein, they are purely heuristic-based and lack theoretical performance guarantee. A well-known theoretical framework for solving network resource allocation problems is network utility maximization (NUM). In \cite{nagaraj2016numfabric}, a NUM-based single-objective network problem is solved theoretically with a system-level implementation. However, the approach in \cite{nagaraj2016numfabric} can only solve single-objective optimization problems and fails to address the multi-objective requirements in our problem. Furthermore, the above literatures \cite{tian2017multi,wang2018joint,perry2014fastpass,nagaraj2016numfabric} do not address the issue of path cardinality constraints and hence, can not be extended to solve our problem.

 The formulation proposed in this paper is built on the foundation of the NUM. Note that without path cardinality constraints, the multi-objective multi-path network resource allocation problem is in fact a convex optimization problem, which can be easily tackled using existing optimization softwares such as \textit{cvx} \cite{grant2014cvx}. However, if we add the path cardinality constraints to the NUM framework, such a problem becomes an $\ell_0$-norm constrained optimization problem, which makes the optimization problem very difficult to solve. The practical meaning of the $\ell_0$-norm constraints is to limit the number of paths an application could use. For example, for data backup service, the end users are not sensitive to the arrival sequence of the data packets and the senders can push as much data as possible to the end users with no limitation on the number of paths. For video streaming applications, the user experience heavily depends on the arrival sequence of the data packets and such applications are limited to use just one path to avoid out-of-sequence issue at the end users.

There are some recent works that take path cardinality constraints into consideration in their network resource allocation problems. In \cite{bi2016network}, a NUM problem with linear objective function and path cardinality constraints is considered. The path cardinality constraints are relaxed using a linear envelope and the solution to the relaxed problem is mapped to a vertex solution using a randomization approach. A theoretical performance gap is analyzed compared with the optimal solution to the relaxed model. However, the approximation and analysis therein cannot be extended to the scenarios with nonlinear objective functions. In \cite{xu2019multiple}, a network optimization problem with restrictions in the number of paths and the requirement of delay is considered, which is a mix-integer optimization problem. The authors propose a heuristic solution by first carefully designing the price of each path and develop an algorithm to match the demand of the traffic flows based on ranking the designed prices. However, it is rather hard to have any theoretical analysis on the performance. Note that the above works handle the cardinality constraints by either considering theoretical approximations or heuristic solutions to bypass the associated constraints, which are highly suboptimal and lack performance guarantee. 

 In this paper, we propose a comprehensive framework for the multi-objective bandwidth allocation and path selection problem with path cardinality constraints. To address the above challenges, we use the ADMM approach to decompose the multi-objective problem into several single-objective subproblems, where we explicitly consider the path cardinality constraints. Then we show that these subproblems are in fact simple to solve even though they are non-convex or quadratic programs. We then theoretically prove the convergence of the proposed algorithm under certain mild conditions and show that it at least achieves suboptimality. Finally, by comparing with several baselines, we show the proposed algorithm has a significant performance gain.

The remainder of the paper is organized as follows. The system model is introduced in Section \ref{Sec: 2}. Then two
types of bandwidth allocation problems are formulated in Section \ref{Sec: 3}. In Section \ref{Sec: 4}, we propose an algorithm to solve the two formulated problems. The convergence analysis is given in Section \ref{Sec: 5}. In Section \ref{Sec: 6}, we give the numerical simulations and conclude this paper in Section \ref{Sec: 7}.

\section{SYSTEM MODEL}
\label{Sec: 2}
We consider a general communication network which is used to deliver data flows from source nodes to destination nodes to facilitate end-to-end data services. Specifically, we consider $K$ flows in the network indexed
by $k$ $(k=1,2,\cdots,K)$. 
There are $L$ uni-directional links indexed by $l$ $(l=1,2,\cdots,L)$ connecting
the network nodes with link capacity ${c}_l$ $({c}_l>0, \forall l)$. We define $\boldsymbol{c}=(c_1, c_2, \cdots, c_L)^\top$ as the link capacity vector of the overall network. The centralized network controller is used to govern the overall bandwidth allocation as shown in \figref{fig: archtect}. Specifically, given the utility of each flow (to be illustrated in Section \ref{Sec: 3}), the network topology and path cardinality constraints, the network controller will allocate bandwidth and paths to each flow so as to maximize the utilities and achieve load balancing. 
Furthermore, there is an associated source-destination pair for each $k$-th flow and we assume it will carry data with size $s_k$ (in bits). Moreover, for each $k$-th flow, there are $P_k$ available paths indexed by $i$ $(i=1,2,\cdots,P_k,\forall k)$. Let $\boldsymbol{x}_k = (x_{k,1},x_{k,2},\cdots,x_{k,P_k})^\top$ be the bandwidth allocation and path selection vector of the $k$-th flow, where each element $x_{k,i}$ $(x_{k,i}\geq 0, \forall k,i )$ represents the bandwidth allocated at path $i$ for the $k$-th flow (in bits/sec). It can be easily observed that the paths selected by the $k$-th flow are those with positive $x_{k,i}$, i.e., $\{i | x_{k,i} > 0  \}$. We further define
\begin{align*}
\boldsymbol{x} = 
\begin{pmatrix}
\boldsymbol{x}_1 \\
\boldsymbol{x}_2 \\
\vdots \\
\boldsymbol{x}_K
\end{pmatrix} 
\end{align*}
as the bandwidth allocation and path selection vector of all the flows. We then define the routing matrix, which is a $L\times P_k$ matrix representing the relationship between the links and the available paths of the $k$-th flow
\begin{equation*}
\boldsymbol{R}_k={
	\left( \begin{array}{cccc}
	R^k_{1,1} & R^k_{1,2} & \cdots & R^k_{1,P_k}\\
	R^k_{2,1} & R^k_{2,2} & \cdots &R^k_{2,P_k}\\
	\vdots & \vdots & \ddots & \vdots \\
	R^k_{L,1} & R^k_{L,2} & \cdots & R^k_{L,P_k}
	\end{array} 
	\right )},
\end{equation*}
where $R^k_{l,i}  \in \{ 0,1\}$. The value $R^k_{l,i}=1$ means path $i$ passes link $l$ for the $k$-th flow and vice versa. We define the following $L \times P$ matrix (where we denote $ P = \Sigma_{k=1}^K P_k $) to be the routing matrix of the overall network:
\begin{align}
\boldsymbol{R} = (\boldsymbol{R}_1, \boldsymbol{R}_2,\cdots, \boldsymbol{R}_K).
\label{eq: routingmatR}
\end{align}
\begin{figure}[]
	\centering  
	\includegraphics[scale=0.35]{./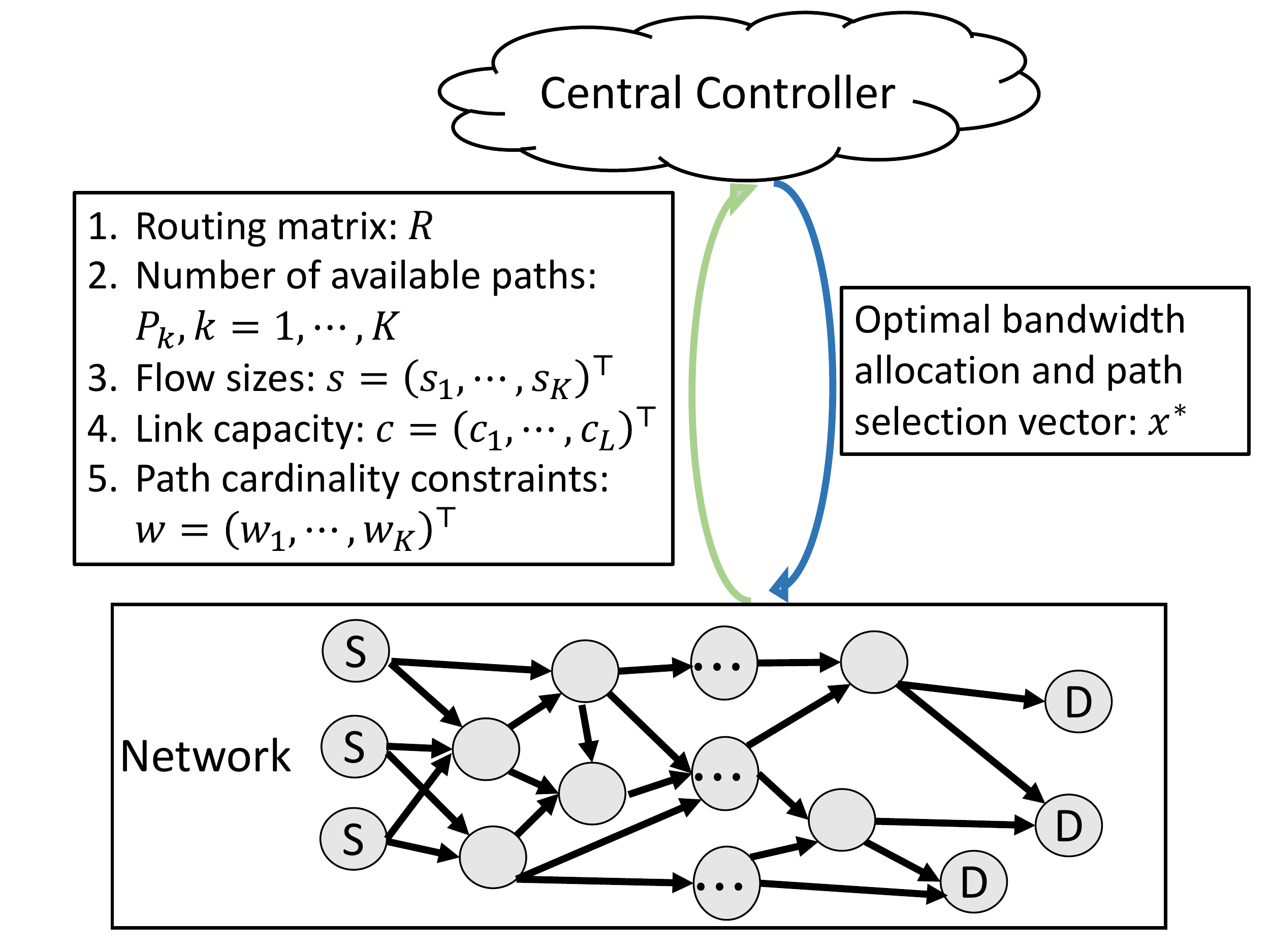} 
	\caption{The architecture of our system model with a centralized controller.}  
	\label{fig: archtect}   
\end{figure}
\begin{figure}[h]
	\centering  
	\includegraphics[scale=0.3]{./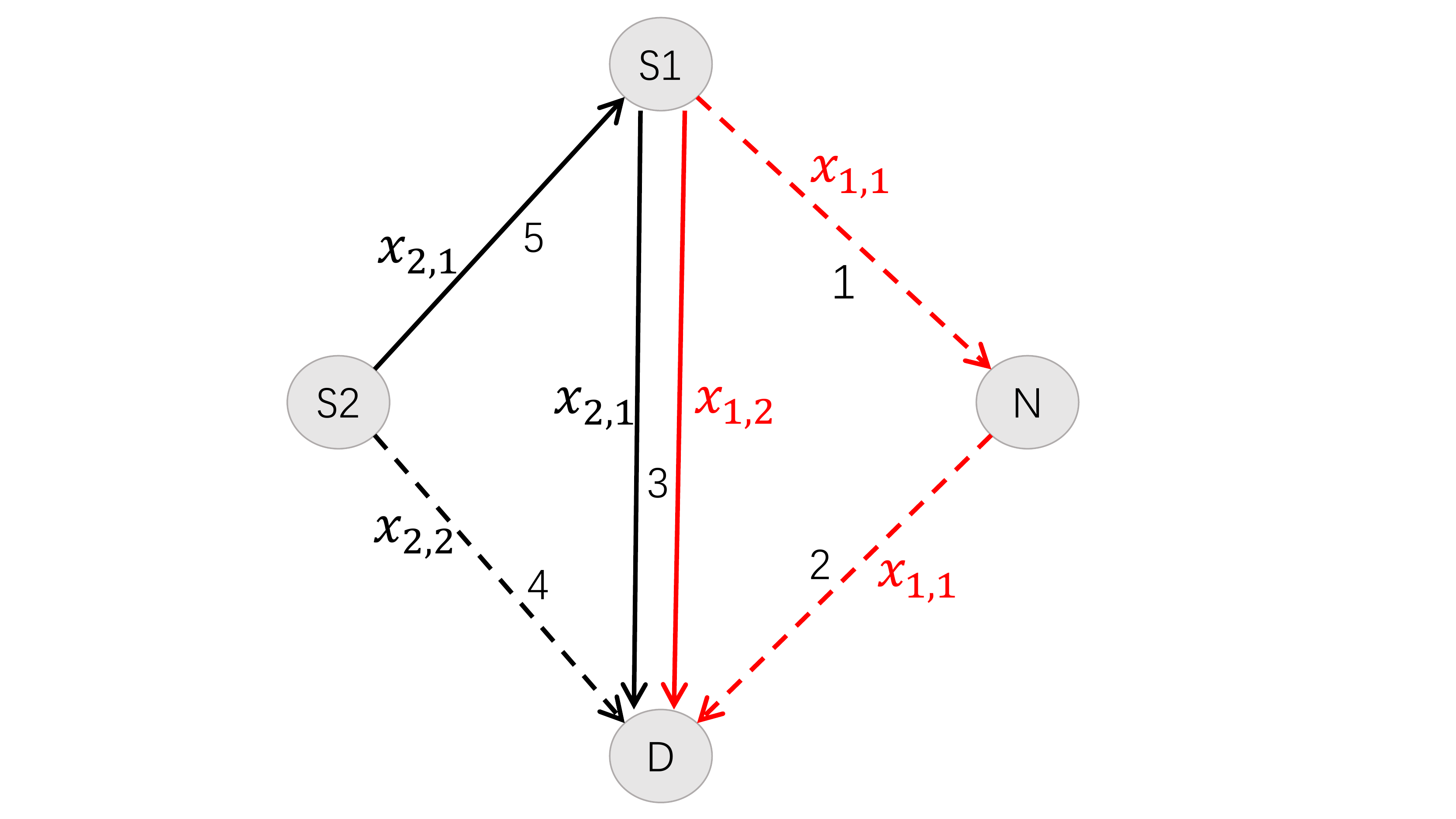} 
	\caption{An illustrative example network with ﬁve links and two users. The ﬁrst user (red line), whose source node is $S1$ and destination node is $D$, has rate $x_{1,1}$ and $x_{1,2}$ on its two paths. The second user (black line), whose source node is $S2$ and destination node is $D$, has rate $x_{2,1}$ and $x_{2,2}$ on its two paths. The numbers next to the lines represent the link indices.}  
	\label{fig: topoexample}   
\end{figure}
\figref{fig: topoexample} gives an example of a small network with  five links and two flows (black and red). Each flow has two available paths (solid line and dashed line) and the corresponding routing matrices for the two flows are given by 
\begin{equation*}
\boldsymbol{R}_1={
	\left( \begin{array}{cc}
	1 & 0 \\
	1 & 0 \\
	0 & 1 \\
	0 & 0 \\
	0 & 0
	\end{array} 
	\right )},
\boldsymbol{R}_2={
	\left( \begin{array}{cc}
	0 & 0 \\
	0 & 0 \\
	1 & 0 \\
	0 & 1 \\
	1 & 0
	\end{array} 
	\right )}.
\end{equation*}

\begin{remark}
	In general, since each path only passes a small portion of all links over the entire network, $\boldsymbol{ R}$ is actually very sparse.
\end{remark}

The above network model covers a lot of practical network scenarios. Two examples are illustrated as follows.

\emph{Example 1 (Backbone Network)}: In Backbone network \cite{papagiannaki2003measurement}, Internet service provider acts as the role of delivering various types of application traffic from multiple data sources to terminal users. The network controller will gather information of traffic flows from the users and content providers on network topology information and traffic requirement information. Then it will optimize and allocate the network bandwidth to different applications under certain constraints on physical link limitations and the number of paths requirements.    

\emph{Example 2 (Inter Datacenter Network)}: In Inter-Datacenter Network (IDN) \cite{feng2012airlift}, data are required to be migrated from one data center network to another and different applications have different requirements. For instance, video and backup data require high throughput while finance data require low latency. The IDN controller will optimize the data migration process so as to balance the requirements of different applications while achieving load balancing.

\section{Problem Formulation}
\label{Sec: 3}
In this section, we formulate two types of bandwidth allocation problems. One is the NUM problem \cite{cao2012delay}, in which we use a practical multi-objective
utility function to capture both proportional fairness and low total completion time. The other is a multi-objective bandwidth allocation and path selection problem with path cardinality constraints
(MOPC). We present the two problems in detail as follows.

\subsubsection{NUM}
\label{sec: num}
The NUM
\cite{cao2012delay,kelly1998rate,gupta2016centralized,allybokus2018multi} is a very popular and practical formulation for bandwidth allocation
problems. Let $U_k(\| \boldsymbol{x}_k \|_1)$ be the utility of the $k$-th flow, which is a single-variable function and a mapping from the total allocated bandwidth resource to the utility value. We further assume $U_k(\cdot)$ is increasing, strictly concave and twice continuously differentiable. Some common choices of $U_k(\|
\boldsymbol{x}_k \|_1)$ are $\log(\| \boldsymbol{x}_k \|_1)$ and $-\frac{s_k}{\|
\boldsymbol{x}_k \|_1}$, which represent the proportional fairness and flow completion time, respectively. In the NUM formulation, the goal is to achieve both proportional fairness and low total completion time among all flows subject to link capacity constraints. For a given positive weight $\beta$, the NUM problem is formulated as
\begin{align}
\begin{split}
\min_{\boldsymbol{x}} \quad & -\sum_{k=1}^K U_k( \| \boldsymbol{x}_k \|_1) \\
\mbox{s.t.}\quad
& \boldsymbol{Rx} \leq \boldsymbol{c} \\
& \boldsymbol{x}\geq 0,
\end{split}
\label{eq: NUMproblem}
\end{align}
where 
\begin{align}
U_k( \| \boldsymbol{x}_k \|_1)=
\beta \log(\| \boldsymbol{x}_k \|_1) - \frac{s_k}{\| \boldsymbol{x}_k \|_1}, \quad \forall k.
\label{eq: NUMfunU}
\end{align}
\subsubsection{MOPC}
Another important goal is to balance network load across links so as to avoid network congestion. Let $\boldsymbol{R}[l]$ be the $l$-th row of the routing matrix $\boldsymbol{R}$ as in \eqref{eq: routingmatR}. Therefore, we consider to include the maximal (or worst-case) link utilization ratio, which is defined as $\max_l \frac{\boldsymbol{R}[l]\boldsymbol{x}}{c_l}$, in the objective function to reflect the level of load balancing. In addition, path cardinality constraints are added so that the $k$-th flow can be transmitted through at most $w_k$ paths, where $w_k \leq P_k$. Therefore, for given positive weights $\alpha, \beta$, the MOPC problem is formulated as
\begin{align}
\begin{split}
\min_{\boldsymbol{x}} \quad & \sum_{k=1}^{K} -U_k( \| \boldsymbol{x}_k \|_1) + \alpha \max_l \dfrac{\boldsymbol{R}[l]\boldsymbol{x}}{c_l} \\
\mbox{s.t.}\quad
&\boldsymbol{Rx} \leq \boldsymbol{c} \\
& \| \boldsymbol{x}_k \|_0 \leq w_k,\quad \forall \, k\\
& \boldsymbol{x} \geq \boldsymbol{0}.
\end{split}
\label{eq: MRproblem}
\end{align}

Note that for each $k$, if $w_k = P_k$, in fact there are no path cardinality constraints on the $k$-th flow. Therefore, our problem formulation \eqref{eq: MRproblem} is general enough to cover scenarios with various types of constraints on the number of paths.

\section{ADMM-based Algorithm Design}
\label{Sec: 4}
In this section, we present our proposed ADMM-based algorithm for solving the NUM and MOPC problems. 
\subsection{ADMM for NUM}
The authors in \cite{gupta2016centralized,allybokus2018multi} propose to solve
the NUM problem with ADMM. Compared with the approaches in \cite{gupta2016centralized,allybokus2018multi}, we adopt a different utility function to facilitate our multi-objective optimization problem and also use a different splitting method on the decision variables. Specifically, by introducing the following convex indicator function of link capacity constraints given by
\begin{align}
\label{eq: indicator}
\mathcal{I}_{ \{ \boldsymbol{y}\leq \boldsymbol{c} \} }(\boldsymbol{y})  = 
\begin{cases}
0, & if \; \boldsymbol{y} \leq \boldsymbol{c}, \\
+\infty, & otherwise,
\end{cases}
\end{align} 
we can equivalently transform problem in \eqref{eq: NUMproblem} into
\begin{align}
\begin{split}
\min_{\boldsymbol{x},\boldsymbol{y} } \quad & -\sum_{k=1}^K U_k( \| \boldsymbol{x}_k \|_1 ) + \mathcal{I}_{ \{ \boldsymbol{y}\leq \boldsymbol{c} \} }(\boldsymbol{y})  \\
\mbox{s.t.}\quad
& \boldsymbol{y} = \boldsymbol{Rx} \\
&  \boldsymbol{x} \geq \boldsymbol{0}.
\end{split}
\label{eq: NUMproblemadmm1}
\end{align}
The augmented Lagrangian function of problem in \eqref{eq: NUMproblemadmm1} is then
\begin{align}
L_\rho(\boldsymbol{x},\boldsymbol{y};\boldsymbol{z}) &= -\sum_{k=1}^K U_k( \| \boldsymbol{x}_k \|_1 ) + \mathcal{I}_{ \{ \boldsymbol{y}\leq \boldsymbol{c} \} }(\boldsymbol{y}) \nonumber \\
&+ \boldsymbol{z }^\top( \boldsymbol{y} - \boldsymbol{Rx}) + \dfrac{\rho}{2} \| \boldsymbol{y} - \boldsymbol{Rx}  \|_2^2,
\label{eq: NUMADMMlarg}
\end{align} 
where $\boldsymbol{z}$ is a Lagrangian multiplier vector associated with the constraint $\boldsymbol{y} = \boldsymbol{Rx}$ in \eqref{eq: NUMproblemadmm1}, and $\rho>0$ is a penalty parameter.
The ADMM for problem in \eqref{eq: NUMproblemadmm1} is derived by alternatively minimizing
$L_\rho$ in \eqref{eq: NUMADMMlarg} with respect to
$\boldsymbol{x}$ and $\boldsymbol{y}$ with the other variables fixed. Specifically, the iterative steps are given by
\begin{align}
\boldsymbol{x}^{j+1} &= \mathop{\arg\min}_{ \boldsymbol{x} \geq \boldsymbol{0}  } L_\rho(\boldsymbol{x},\boldsymbol{y}^j;\boldsymbol{z}^j) \nonumber  \\
&= \mathop{\arg\min}_{\boldsymbol{x} \geq \boldsymbol{0} } -\sum_{k=1}^{K} U_k( \| \boldsymbol{x}_k\|_1 )+ \dfrac{\rho}{2} \| \boldsymbol{y}^j - \boldsymbol{Rx} + \frac{\boldsymbol{z}^j}{\rho} \|^2, 
\label{eq: NUMdmmpro11}  \\
\boldsymbol{y}^{j+1} &= \mathop{\arg\min}_{ \boldsymbol{y} } L_\rho(\boldsymbol{x}^{j+1}, \boldsymbol{y};\boldsymbol{z}^j) \nonumber  \\
&= \mathop{\arg\min}_{\boldsymbol{y} \leq \boldsymbol{c} } \dfrac{\rho}{2} \| \boldsymbol{y} - \boldsymbol{Rx}^{j+1} + \frac{\boldsymbol{z}^j}{\rho} \|^2,
\label{eq: NUMdmmpro22} 
\end{align}
where $j$ is the step index, $\boldsymbol{x}^{j+1} = (\boldsymbol{x}_1^{j+1};\boldsymbol{x}_2^{j+1};\cdots;\boldsymbol{x}_K^{j+1})$, and $\boldsymbol{x}_k^{j} = (\boldsymbol{x}_{k,1}^{j},\boldsymbol{x}_{k,2}^{j}, \cdots,\boldsymbol{x}_{k,P_k}^{j})^\top$, $\forall k = 1,2,\cdots, K$. After the update steps of $\boldsymbol{x}$ and $\boldsymbol{y}$ as above,
the update of the multiplier $\boldsymbol{z}$ is given by
\begin{align}
\boldsymbol{z}^{j+1} = \boldsymbol{z}^j + \rho (\boldsymbol{y}^{j+1} - \boldsymbol{Rx}^{j+1}). \label{eq: NUMdmmpro33} 
\end{align}

We next focus on solving the problems in \eqref{eq: NUMdmmpro11} and \eqref{eq: NUMdmmpro22}. For the subproblem in \eqref{eq: NUMdmmpro11}, it is hard to get a closed-form solution since
$\{\boldsymbol{x}_k\}_{k=1,\cdots,K}$ are coupled together due to the quadratic term $\dfrac{\rho}{2} \left\|  \boldsymbol{y}^j - \boldsymbol{Rx} + \frac{\boldsymbol{z}^j}{\rho} \right\| ^2$. To deal with the coupling issue, we consider linearizing the
quadratic term in \eqref{eq: NUMdmmpro11} and adding a proximal term as follows:
\begin{align}
 &\boldsymbol{x}^{j+1}  \nonumber \\
	&= \mathop{\arg\min}_{ \boldsymbol{x} \geq \boldsymbol{0}} \quad  \sum_{k=1}^{K} -U_k( \| \boldsymbol{x}_k \|_1) - \rho(\boldsymbol{q}^j)^\top \boldsymbol{x} + \dfrac{\mu}{2}\| \boldsymbol{x} -\boldsymbol{x}^j  \|^2 \nonumber \\
&= \mathop{\arg\min}_{ \boldsymbol{x} \geq \boldsymbol{0}} \quad  \sum_{k=1}^{K} -U_k( \| \boldsymbol{x}_k \|_1) +  \dfrac{\mu}{2}\| \boldsymbol{x} - \boldsymbol{\nu}^j  \|^2,
\label{eq: NUMx-update12}
\end{align}
where  $\boldsymbol{q}^j = \boldsymbol{R}^\top (\boldsymbol{y}^j - \boldsymbol{Rx}^j +  \boldsymbol{z}^j /\rho)$,  $\boldsymbol{\nu}^j =  \boldsymbol{x}^j + (\rho/\mu)\boldsymbol{q}^j $, and $\mu > \rho \| \boldsymbol{ R} \|_2^2$. Note that $\boldsymbol{\nu}^{j} = (\boldsymbol{\nu}_1^{j};\boldsymbol{\nu}_2^{j};\cdots;\boldsymbol{\nu}_K^{j})$, and for $\forall k = 1,2,\cdots, K$, $\boldsymbol{\nu}_k^{j} = (\boldsymbol{\nu}_{k,1}^{j},\boldsymbol{\nu}_{k,2}^{j}, \cdots,\boldsymbol{\nu}_{k,P_k}^{j})^\top$. 
Therefore based on \eqref{eq: NUMx-update12}, the problem in the right hand side can be decomposed into several subproblems, where each subproblem only contains one block decision variable $\boldsymbol{x}_k$ as follows:
\begin{align}
\boldsymbol{x}^{j+1}_k= \mathop{\arg\min}_{ \boldsymbol{x}_k \geq \boldsymbol{0}} \quad  -U_k( \| \boldsymbol{x}_k \|_1) +  \dfrac{\mu}{2}\| \boldsymbol{x}_k - \boldsymbol{\nu}_k^j  \|^2, \forall k.
\label{eq: NUMx-subproblemblock2}
\end{align}
The solution to problem in \eqref{eq: NUMx-subproblemblock2} is then given in the following Lemma \ref{lemma: xupdateinfty}. 
\begin{lemma}
	The solution of problem in \eqref{eq: NUMx-subproblemblock2} is given by
	\begin{align*}
	\begin{cases}
	&x_{k,i}^{j+1} = \max(0, \nu_{k,i}^j + \zeta), \quad \forall i,\\
	& \mu  \zeta = U'_k \left( \sum_{i=1}^{P_k} \max(0, \nu_{k,i}^j +  \zeta) \right),
	\end{cases}
	\end{align*}
	where $ \zeta$ is some positive number. Without loss of generality, suppose the elements of $\boldsymbol{\nu}_k^j = (\nu_{k,1}^j, \nu_{k,2}^j,\cdots,\nu_{k,P_k}^j)^\top$ are in descending order. Then if there is some smallest index $i'$ $(i' \in {2,\cdots, P_k}) $ such that
	\begin{align*}
	\mu \zeta \geq U'_k \left( \sum_{i=1}^{P_k} \max(0, \nu_{k,i}^j + \zeta) \right),
	\end{align*} 
	where $\zeta = -\nu_{k,i'}^j $,
	we can obtain $\zeta$ through finding the maximal root of a single-variable cubic equation
	\begin{align}
	\mu \zeta \left(\sum_{i=1}^{i'-1} (\nu^j_{k,i}+\zeta) \right)^2 = \beta \left(\sum_{i=1}^{i'-1} (\nu^j_{k,i}+\zeta) \right) + s_k,
	\label{eq: numcubic1}
	\end{align}
	otherwise we can obtain $\zeta$ through finding the maximal root of a single-variable cubic equation
	\begin{align}
	\mu \zeta \left(\sum_{i=1}^{P_k} (\nu^j_{k,i}+\zeta) \right)^2 = \beta \left(\sum_{i=1}^{P_k} (\nu^j_{k,i}+\zeta) \right) + s_k.
	\label{eq: numcubic2}
	\end{align}
	\label{lemma: xupdateinfty}
\end{lemma}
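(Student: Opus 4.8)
The plan is to treat \eqref{eq: NUMx-subproblemblock2} as a convex program. On the feasible orthant $\boldsymbol{x}_k \geq \boldsymbol{0}$ we have $\|\boldsymbol{x}_k\|_1 = \sum_i x_{k,i}$, which is linear, so $-U_k(\|\boldsymbol{x}_k\|_1)$ is convex and the quadratic penalty makes the objective strongly convex; hence the KKT conditions are both necessary and sufficient for the unique global minimizer. First I would attach multipliers $\lambda_i \geq 0$ to the constraints $x_{k,i}\geq 0$ and write stationarity as $-U_k'(\|\boldsymbol{x}_k\|_1) + \mu(x_{k,i}-\nu_{k,i}^j) - \lambda_i = 0$ for each $i$. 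Setting $\zeta := U_k'(\|\boldsymbol{x}_k\|_1)/\mu$ and eliminating $\lambda_i$ through complementary slackness yields $x_{k,i} = \max(0,\nu_{k,i}^j+\zeta)$ together with the scalar fixed-point condition $\mu\zeta = U_k'(\sum_i \max(0,\nu_{k,i}^j+\zeta))$, i.e. the pair of equations in the statement. A small but necessary preliminary is to argue that the optimal $\|\boldsymbol{x}_k\|_1 > 0$ (otherwise $-U_k \to +\infty$); since $U_k'(t) = \beta/t + s_k/t^2 > 0$ for $t>0$, this forces $\zeta > 0$, the claimed positivity.

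Next I would recast the fixed-point condition as one-dimensional root-finding and exploit monotonicity. Define $F(\zeta) := \mu\zeta - U_k'(h(\zeta))$ with $h(\zeta) := \sum_i \max(0,\nu_{k,i}^j+\zeta)$. Because $U_k'$ is decreasing and $h$ is nondecreasing, $F$ is strictly increasing wherever $h(\zeta)>0$, with $F \to -\infty$ as $h \to 0^+$ and $F \to +\infty$ as $\zeta \to \infty$, so the root $\zeta^\star$ is unique. Ordering $\nu_{k,1}^j \geq \cdots \geq \nu_{k,P_k}^j$ makes the active set $\{i : \nu_{k,i}^j+\zeta > 0\}$ a prefix $\{1,\dots,m\}$, and the breakpoints of $h$ sit at $\zeta = -\nu_{k,i}^j$ in increasing order. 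Evaluating $F$ at these breakpoints, the test ``$\mu\zeta \geq U_k'(\cdots)$ at $\zeta = -\nu_{k,i'}^j$'' is exactly $F(-\nu_{k,i'}^j) \geq 0$; choosing the smallest such $i'$ localizes $\zeta^\star$ to $(-\nu_{k,i'-1}^j, -\nu_{k,i'}^j]$, where precisely indices $1,\dots,i'-1$ are active, so $m = i'-1$. If no breakpoint in $\{2,\dots,P_k\}$ achieves $F \geq 0$, then $\zeta^\star$ exceeds the last breakpoint and all $P_k$ paths are active, i.e. $m = P_k$. Index $1$ is always active because $\|\boldsymbol{x}_k\|_1>0$ forces $\zeta^\star > -\nu_{k,1}^j$, which is why the search starts at $i'=2$.

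Finally, on the localizing interval the active set is fixed, so $h(\zeta) = \sum_{i=1}^{m}(\nu_{k,i}^j+\zeta) =: S$ and $F(\zeta)=0$ reads $\mu\zeta = \beta/S + s_k/S^2$. Clearing denominators gives $\mu\zeta S^2 = \beta S + s_k$, which is the cubic \eqref{eq: numcubic1} for $m=i'-1$ and \eqref{eq: numcubic2} for $m=P_k$. To justify selecting the maximal root, I would note the polynomial identity $\phi(\zeta) = S^2 F(\zeta)$ on $\{S>0\}$, where $\phi(\zeta) = \mu\zeta S^2 - \beta S - s_k$: since $F$ is strictly increasing and changes sign from negative to positive at $\zeta^\star$, we get $\phi > 0$ for every $\zeta > \zeta^\star$, so $\zeta^\star$ is the largest real root, and combined with $\zeta^\star > 0$ it is the desired maximal positive root. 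I expect the main obstacle to be the active-set and interval bookkeeping in the second step—matching the sign test at the breakpoints to the correct prefix and confirming the root lands in the asserted interval—rather than the convexity argument or the cubic algebra, which are routine.
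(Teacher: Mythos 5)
Your proof is correct and follows essentially the same route as the paper's: KKT conditions give the thresholded form $x_{k,i}^{j+1}=\max(0,\nu_{k,i}^j+\zeta)$ together with the scalar equation $\mu\zeta=U_k'(\sum_i\max(0,\nu_{k,i}^j+\zeta))$, monotonicity localizes $\zeta$ between the breakpoints $-\nu_{k,i}^j$, and clearing denominators on the fixed active prefix yields the cubic. The only differences are in the bookkeeping: you obtain the $\max$ form directly from complementary slackness and dual feasibility where the paper uses a small perturbation/contradiction argument, and you supply the justification (via the identity $\phi=S^2F$ and the sign change of the strictly increasing $F$) for selecting the \emph{maximal} root of the cubic, a point the paper asserts without proof.
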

\begin{proof}
	See Appendix \ref{Prooflemma2}.
\end{proof}
For the subproblem in \eqref{eq: NUMdmmpro22}, the solution can be easily obtained by performing a projection operation. That is, for $l = 1,2,\cdots,L$,
\begin{equation}
y_l^{j+1}=\left\{
\begin{array}{ll}
\theta_l^j, &   if \quad \theta_l^j \leq c_l, \\
c_l, &   otherwise,
\end{array} \right.
\label{eq: NUMysub}
\end{equation}
where $\boldsymbol{ \theta}^j = \boldsymbol{ R}\boldsymbol{ x}^{j+1} - \boldsymbol{ z}^j/\rho$.
		\begin{figure}[]
	\centering  
	\includegraphics[scale=0.28]{./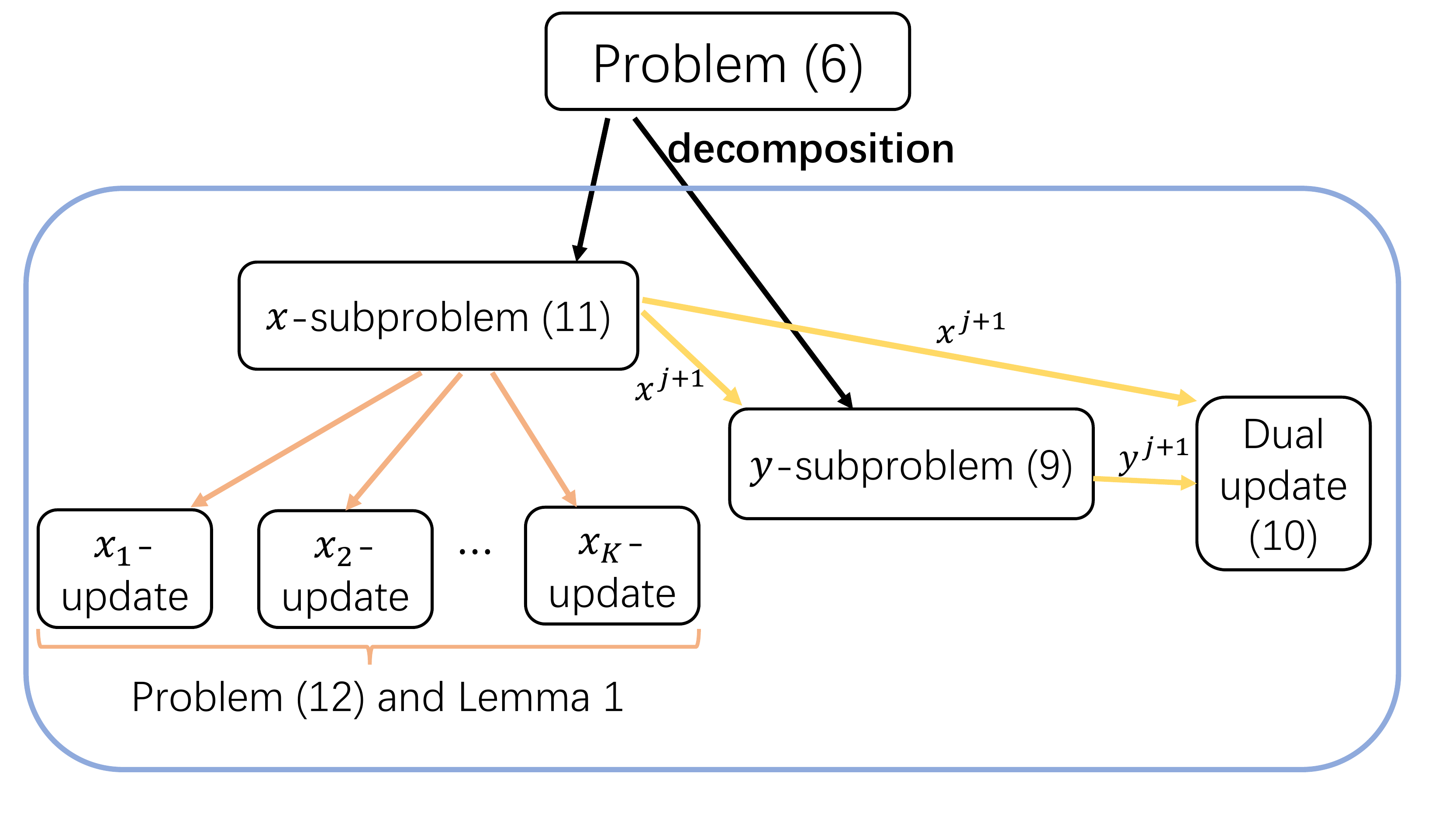} 
	\caption{Illustration of the decomposition procedure of our ADMM for NUM.}  
	\label{fig: numalgprocedure}   
\end{figure}

  The overall decomposition procedure using the ADMM approach on the NUM is outlined in \figref{fig: numalgprocedure}. Adopting the stopping criteria in the Section 3.3.1 in \cite{boyd2011distributed}, the complete procedure of our ADMM for NUM is summarized in Algorithm \ref{alg: ADMM}.
\begin{algorithm}[htbp]
	\caption{ADMM for the NUM} 
	\label{alg: ADMM} 
	\begin{algorithmic}[1] 
		\REQUIRE 
		A randomly initialized point $(\boldsymbol{x}^0, \boldsymbol{y}^0,\boldsymbol{z}^0)$,\\
		~~~~\; set $j=0$, $\rho > 0$, and $\mu > \rho \|\boldsymbol{R} \|^2$.
		\ENSURE 
		 An optimal bandwidth allocation and path selection vector $\boldsymbol{x}^*$.
		
		\WHILE{The stopping criteria (see Section 3.3.1 in \cite{boyd2011distributed}) are not satisfied}

		\STATE(Solve the $\boldsymbol{x}$-minimization problem in \eqref{eq: NUMx-subproblemblock2}  according to Lemma \ref{lemma: xupdateinfty})
		\FOR{$k = 1 \to K$}
		\STATE Sort the elements of $\boldsymbol{\nu}^j_k$ in descending order.
		\STATE Try every positive $\zeta=-\nu^j_{k,i} (i=2,\cdots,P_k)$ to determine the corresponding interval in which $\zeta$ satisfies $\mu \zeta = U'_k \left( \sum_{i=1}^{P_k} \max(0, \nu^j_{k,i} + \zeta) \right) $.
		\STATE Solve the single-variable cubic equation in \eqref{eq: numcubic1} or \eqref{eq: numcubic2}.
		\STATE Let $x_{k,i}^{j+1} = \max(0, \nu^j_{k,i} + \zeta), \quad \forall i$.
		\ENDFOR 
		\STATE (Solve the $\boldsymbol{y}$-minimization problem \eqref{eq: NUMdmmpro22} with \eqref{eq: NUMysub})
		\STATE Compute $\boldsymbol{\theta}^j = \boldsymbol{ R}\boldsymbol{x}^{j+1}-\boldsymbol{z}^j/\rho$.
		\STATE For each $l=1,\cdots,L$, perform \begin{equation*}
		y_l^{j+1}=\left\{
		\begin{array}{ll}
		\theta^j_l, &   if \quad \theta^j_l \leq c_l, \\
		c_l, &   otherwise.
		\end{array} \right.
		\end{equation*}
		\STATE Update the multiplier $\boldsymbol{z} $ according to \eqref{eq: NUMdmmpro33}.
				\STATE Set $j = j +1$.
		\ENDWHILE
	\end{algorithmic}
\end{algorithm}

\subsection{ADMM for MOPC}
\label{sec: centralizedalg}
In this subsection, we solve the MPOC using the ADMM approach. We first transform \eqref{eq: MRproblem} into the following equivalent form:
\begin{align}
\begin{split}
\min_{\boldsymbol{x} \in \mathcal{X},\boldsymbol{y}\in \mathcal{Y}} \quad &  \sum_{k=1}^{K} -U_k( \| \boldsymbol{x}_k \|_1) + \alpha \max_l \dfrac{y_l}{c_l} \\
\mbox{s.t.}\quad
& \boldsymbol{y} = \boldsymbol{Rx},
\end{split}
\label{eq: problem2}
\end{align}
where $\mathcal{X} = \{\boldsymbol{ x}| \| \boldsymbol{x}_k  \|_0\leq w_k, \, \forall k; \boldsymbol{x} \geq \boldsymbol{0} \}$, and $\mathcal{Y} = \{\boldsymbol{ y}|  \boldsymbol{ 0} \leq \boldsymbol{y} \leq \boldsymbol{c} \}$.
The augmented Lagrangian function of problem in \eqref{eq: problem2} is
\begin{align}
L_\rho(\boldsymbol{x},\boldsymbol{y};\boldsymbol{z}) & = \sum_{k=1}^{K} -U_k( \| \boldsymbol{x}_k \|_1) + \alpha\max_l \dfrac{y_l}{c_l}   \nonumber  \\
& + \boldsymbol{z}^\top( \boldsymbol{y} - \boldsymbol{Rx}) + \dfrac{\rho}{2} \| \boldsymbol{y} - \boldsymbol{Rx}  \|_2^2,
\label{eq: lagsrp}
\end{align} 
where $\boldsymbol{z}$ is a Lagrangian multiplier vector  associated with the constraint $\boldsymbol{y} = \boldsymbol{Rx}$ in \eqref{eq: problem2} and $\rho>0$ is a penalty parameter.

 Similar to the ADMM for the NUM, the ADMM for the MOPC problem in \eqref{eq: problem2} performs minimization of $L_\rho$ in \eqref{eq: lagsrp} with respect
to $\boldsymbol{x}$ and $\boldsymbol{y}$ alternatively followed by the update of $\boldsymbol{z}$. That is, at iteration $j$, the following updates of variables are performed:
\begin{align}
\boldsymbol{x}^{j+1} &= \mathop{\arg\min}_{ \boldsymbol{x} \in \mathcal{X} } L_\rho(\boldsymbol{x},\boldsymbol{y}^j ; \boldsymbol{z}^j) \nonumber  \\
&{ = \mathop{\arg\min}_{\boldsymbol{x} \in \mathcal{X}} \sum_{k=1}^{K} -U_k( \| \boldsymbol{x}_k \|_1)+ \dfrac{\rho}{2} \| \boldsymbol{y}^j - \boldsymbol{Rx} + \frac{\boldsymbol{z}^j}{\rho} \|^2,}
\label{eq: scaleadmmpro1}\\ 
\boldsymbol{y}^{j+1} &= \mathop{\arg\min}_{\boldsymbol{y} \in \mathcal{Y} } L_\rho(\boldsymbol{x}^{j+1},\boldsymbol{y};\boldsymbol{z}^j)  \nonumber\\
& = \mathop{\arg\min}_{\boldsymbol{y} \in \mathcal{Y} } \alpha \max_l \dfrac{y_l}{c_l}+\dfrac{\rho}{2} \| \boldsymbol{y} - \boldsymbol{Rx}^{j+1} + \frac{\boldsymbol{z}^j}{\rho} \|^2,
\label{eq: scaleadmmpro2}\\
\boldsymbol{z}^{j+1} &= \boldsymbol{z}^j +  \rho (\boldsymbol{y}^{j+1} - \boldsymbol{Rx}^{j+1}). \label{eq: scaleadmmpro3}
\end{align}

We next focus on solving the subproblem in \eqref{eq:
scaleadmmpro1}. Similar to the technique in solving \eqref{eq: NUMdmmpro11}, we also consider linearizing
the quadratic term in \eqref{eq: scaleadmmpro1} and adding a proximal term:
\begin{align}
 & \boldsymbol{x}^{j+1} \nonumber \\
 &= \mathop{\arg\min}_{ \boldsymbol{ x} \in \mathcal{X}} \quad  \sum_{k=1}^{K} -U_k( \| \boldsymbol{x}_k \|_1) - \rho(\boldsymbol{q}^j)^\top \boldsymbol{x} + \dfrac{\mu}{2}\| \boldsymbol{x} -\boldsymbol{x}^j  \|^2 \nonumber \\
&= \mathop{\arg\min}_{ \boldsymbol{ x} \in \mathcal{X}} \quad  \sum_{k=1}^{K} -U_k( \| \boldsymbol{x}_k \|_1) +  \dfrac{\mu}{2}\| \boldsymbol{x} - \boldsymbol{\nu}^j  \|^2,
\label{x-update1}
\end{align}
where  $\boldsymbol{q}^j = \boldsymbol{R}^\top (\boldsymbol{y}^j - \boldsymbol{Rx}^j +  \boldsymbol{z}^j /\rho), \boldsymbol{\nu}^j =  \boldsymbol{x}^j + (\rho/\mu) \boldsymbol{q}^j $, and $\mu > \rho \|\boldsymbol{ R} \|^2$. 
We decompose \eqref{x-update1} into several subproblems, where each subproblem is associated with only one block $\boldsymbol{x}_k$,
\begin{align}
	\boldsymbol{x}^{j+1}_k= \mathop{\arg\min}_{ \| \boldsymbol{x}_k  \|_0\leq w_k, \boldsymbol{x}_k \geq \boldsymbol{0}}   -U_k( \| \boldsymbol{x}_k \|_1) +  \dfrac{\mu}{2}\| \boldsymbol{x}_k - \boldsymbol{\nu}_k^j  \|^2, \forall k.
	\label{x-subproblemblock}
\end{align}
Leveraging the strict concaveness and monotonicity of the utility function $U_k(\cdot)$ in $(0,+\infty)$, we obtain the solution to \eqref{x-subproblemblock} in the following two lemmas as well as Lemma \ref{lemma: xupdateinfty}.

\begin{lemma}
	
If $w_k=1$, for block
	$\boldsymbol{x}_k$, we denote $i\, (i\in \{1,2,\cdots, P_k\} )$ as the index
	where $\nu_{k,i}^j$ attains the maximal value, then only the $i$-th element of $\boldsymbol{x}_k^{j+1}$, i.e., $x_{k,i}^{j+1}$ is non-zero and its value is the maximal root of a single-variable cubic equation 
	\begin{align}
	\mu \left(x_{k,i}^{j+1}\right)^3 - \mu \nu_{k,i}^j \left(x_{k,i}^{j+1}\right)^2 - \beta \left(x_{k,i}^{j+1}\right) - s_k  =0,\, \forall k,
	\label{eq: lemma2cubic}
	\end{align}
which always has at least one positive root.
	\label{lemma: xupdate1}
\end{lemma}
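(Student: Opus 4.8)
The plan is to exploit the hard constraint $\|\boldsymbol{x}_k\|_0 \le 1$ to collapse the $P_k$-dimensional block subproblem \eqref{x-subproblemblock} into a family of one-dimensional problems indexed by the single admissible support, and then to optimize over that index separately. First I would observe that since $U_k$ is defined on $(0,+\infty)$ with $-U_k(t)\to+\infty$ as $t\to 0^+$, the all-zero vector yields an infinite objective; hence exactly one coordinate, say the $m$-th, must be positive at the optimum. Writing $t=x_{k,m}\ge 0$ and using $\|\boldsymbol{x}_k\|_1=t$, the objective restricted to support $\{m\}$ becomes
\begin{align*}
f_m(t) = -U_k(t) + \frac{\mu}{2}(t-\nu_{k,m}^j)^2 + \frac{\mu}{2}\sum_{i\neq m}(\nu_{k,i}^j)^2,
\end{align*}
so the block problem splits into a one-dimensional minimization over $t$ followed by a discrete minimization over $m\in\{1,\dots,P_k\}$.

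For the inner minimization I would note that $-U_k$ is strictly convex (as $U_k$ is strictly concave) and the quadratic term is convex, so $f_m$ is strictly convex on $(0,+\infty)$; combined with $f_m(t)\to+\infty$ as $t\to 0^+$ and as $t\to+\infty$, this yields a unique interior minimizer $t_m^\star>0$ characterized by the stationarity condition $\mu(t-\nu_{k,m}^j)=U_k'(t)$. Substituting $U_k'(t)=\beta/t+s_k/t^2$ and clearing denominators by multiplying through by $t^2$ produces exactly the cubic \eqref{eq: lemma2cubic}. To show the cubic always has a positive root and that the minimizer is its maximal root, I would set $g(t)$ to be the left-hand side of \eqref{eq: lemma2cubic}, note $g(0)=-s_k<0$ and $g(t)\to+\infty$ (giving a positive root by the intermediate value theorem), and observe the identity $g(t)=t^2 f_m'(t)$, so that $g$ shares the sign of $f_m'$ for $t>0$. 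Since $f_m'$ is strictly increasing it has a single zero, whence $g$ has a unique positive root, any remaining real roots being nonpositive; this root is simultaneously the minimizer $t_m^\star$ and the maximal root.

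The main obstacle is the outer minimization over the support index $m$, where a direct comparison of cubic roots would be unwieldy. Here the key step is an envelope / value-function argument. Dropping the $m$-independent constant $\frac{\mu}{2}\|\boldsymbol{\nu}_k^j\|^2$ and completing the square, the optimal value associated with support $\{m\}$ equals $\phi(\nu_{k,m}^j)$ up to that constant, where
\begin{align*}
\phi(\nu) = \min_{t\ge 0}\Big[-U_k(t) + \frac{\mu}{2}t^2 - \mu\nu t\Big].
\end{align*}
For each fixed $t>0$ the bracketed expression is affine and strictly decreasing in $\nu$ (slope $-\mu t<0$), so $\phi$, being a pointwise infimum of such functions, is non-increasing in $\nu$. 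Consequently the support index minimizing the objective is exactly the one maximizing $\nu_{k,m}^j$, identifying $i=\mathop{\arg\max}_{m}\nu_{k,m}^j$ as claimed.

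Finally I would assemble the pieces: the optimal support is the single index $i$ where $\nu_{k,i}^j$ is maximal, the nonzero value $x_{k,i}^{j+1}$ is the maximal (equivalently, the unique positive) root of \eqref{eq: lemma2cubic}, and every other coordinate of $\boldsymbol{x}_k^{j+1}$ is zero. The delicate points to get right are the coercivity-at-the-origin argument (ensuring the minimizer is interior and positive even when $\nu_{k,i}^j\le 0$) and the monotonicity of $\phi$, which is what legitimizes selecting the index by the simple rule $\mathop{\arg\max}_{m}\nu_{k,m}^j$ rather than by comparing cubic roots.
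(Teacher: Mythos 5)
Your proposal is correct and follows essentially the same route as the paper: the support index is selected by the same pointwise comparison of objectives across candidate supports (which you package as monotonicity of the value function $\phi(\nu)$), and the nonzero value is obtained from the same stationarity condition yielding the cubic with negative constant term. Your treatment is somewhat more careful on two points the paper glosses over — coercivity at the origin forcing a strictly positive minimizer, and uniqueness of the positive root via the identity $g(t)=t^2 f_m'(t)$ — but the underlying argument is the same.
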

\begin{proof}
	See Appendix \ref{Prooflemma1}.
\end{proof}

\begin{lemma}

If $ 1<w_k <P_k$, without loss of generality, suppose the elements of $\boldsymbol{\nu}_k^j$ are in descending order. Then one of its solutions (there may exist ``many'' solutions because $\boldsymbol{\nu}_k^j$ may have equal elements) satisfies
	\begin{subequations}
		\begin{align}
		&x_{k,i}^{j+1} = \max(0, \nu^j_{k,i} + \zeta), \quad i=1,\cdots,w_k, \label{eq: lemmaxs1} \\
		& \mu \zeta = U'_k \left( \sum_{i=1}^{w_k} \max(0, \nu^j_{k,i} + \zeta) \right), \label{eq: lemmaxs2} \\
		&x_{k,i}^{j+1} = 0, \quad i > w_k, \label{eq: lemmaxs3}
		\end{align}
	\end{subequations}
Where $\zeta$ is some positive number. If there is some smallest index $i'$ $(i' \in {2,\cdots, w_k}) $ such that
	\begin{align*}
	\mu \zeta \geq U'_k \left( \sum_{i=1}^{w_k} \max(0, \nu_{k,i}^j + \zeta) \right),
	\end{align*} 
	where $\zeta = -\nu_{k,i'}^j $,
	we can obtain $\zeta$ through finding the maximal root of a single-variable cubic equation
	\begin{align}
	\mu \zeta \left(\sum_{i=1}^{i'-1} (\nu^j_{k,i}+\zeta) \right)^2 = \beta \left(\sum_{i=1}^{i'-1} (\nu^j_{k,i}+\zeta) \right) + s_k,
	\label{eq: numcubic3}
	\end{align}
	otherwise we can obtain $\zeta$ through finding the maximal root of a single-variable cubic equation
	\begin{align}
	\mu \zeta \left(\sum_{i=1}^{w_k} (\nu^j_{k,i}+\zeta) \right)^2 = \beta \left(\sum_{i=1}^{w_k} (\nu^j_{k,i}+\zeta) \right) + s_k.
	\label{eq: numcubic4}
	\end{align}
	\label{lemma: xupdate2}
\end{lemma}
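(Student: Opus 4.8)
The plan is to dispose of the non-convexity created by the cardinality constraint $\|\boldsymbol{x}_k\|_0 \le w_k$ combinatorially: first fix the \emph{support} (the set of indices allowed to be nonzero), solve the resulting convex problem, and only then identify the optimal support. For a fixed index set $S$ with $|S|\le w_k$, problem \eqref{x-subproblemblock} reduces to minimizing $-U_k(\sum_{i\in S} x_{k,i}) + \frac{\mu}{2}\sum_{i\in S}(x_{k,i}-\nu_{k,i}^j)^2$ over $x_{k,i}\ge 0$, $i\in S$, which is convex since $-U_k$ is convex and $\sum_{i\in S}x_{k,i}$ is affine. I would write the KKT system: setting $t=\sum_{i\in S}x_{k,i}$ and $\zeta = U_k'(t)/\mu$, stationarity together with complementarity for the constraints $x_{k,i}\ge 0$ forces $x_{k,i} = \max(0,\nu_{k,i}^j + \zeta)$ for $i\in S$, which is exactly the thresholded form \eqref{eq: lemmaxs1} and \eqref{eq: lemmaxs2}. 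Because $U_k'(t) = \beta/t + s_k/t^2 > 0$ for every $t>0$, this automatically gives $\zeta>0$, so the claimed positivity is free.

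Next I would verify that $\zeta$ is well defined. The quantity $t(\zeta)=\sum_{i\in S}\max(0,\nu_{k,i}^j+\zeta)$ is continuous and nondecreasing in $\zeta$, while $U_k'(\cdot)/\mu$ is strictly decreasing in $t$; hence $\zeta\mapsto U_k'(t(\zeta))/\mu$ is nonincreasing and the self-consistency equation $\mu\zeta = U_k'(t(\zeta))$ has a unique positive root. Since $U_k'(t)\to+\infty$ as $t\to 0^+$, a degenerate solution with $t=0$ is impossible, so at least the largest component is active at the optimum.

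The combinatorial heart of the argument is to show the optimal support is the top-$w_k$ index set $\{1,\dots,w_k\}$ after sorting $\boldsymbol{\nu}_k^j$ in descending order, and this is the step I expect to be the main obstacle, because it is where the cardinality structure is actually resolved. I would use an exchange argument: starting from an optimizer with an active index $i\in S$ and an inactive index $p\notin S$ satisfying $\nu_{k,p}^j\ge\nu_{k,i}^j$, move the mass $x_{k,i}$ from $i$ to $p$. The total $t$, and hence $-U_k(t)$, is unchanged, while the quadratic penalty changes by exactly $\mu x_{k,i}(\nu_{k,i}^j-\nu_{k,p}^j)\le 0$, so the objective does not increase and the swapped point is feasible for the new support. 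Repeating such swaps rewrites any support of size $w_k$ into the top-$w_k$ support without raising the value, and any support of size smaller than $w_k$ can be padded up to size $w_k$ since this only enlarges the feasible set. Here I would be careful with ties in $\boldsymbol{\nu}_k^j$, which is precisely the source of the ``many solutions'' caveat in the statement.

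Finally I would locate the active set inside $\{1,\dots,w_k\}$. With the entries sorted, the active indices form a prefix $\{1,\dots,i'-1\}$ and the breakpoints occur at $\zeta=-\nu_{k,i'}^j$. Using monotonicity of $h(\zeta)=\mu\zeta - U_k'(t(\zeta))$, the smallest index $i'$ for which $\mu(-\nu_{k,i'}^j)\ge U_k'\!\left(\sum_{i=1}^{w_k}\max(0,\nu_{k,i}^j-\nu_{k,i'}^j)\right)$ pins down the interval containing the root, so only components $1,\dots,i'-1$ are positive; substituting $t=\sum_{i=1}^{i'-1}(\nu_{k,i}^j+\zeta)$ into $\mu\zeta=\beta/t + s_k/t^2$ and clearing the denominator $t^2$ yields the cubic \eqref{eq: numcubic3}, whereas if no such $i'$ exists all $w_k$ components stay active and the same computation gives \eqref{eq: numcubic4}. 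I would close by observing that clearing $t^2$ only introduces spurious roots with $t\le 0$, so the genuine $\zeta$ is the maximal real root of the cubic.
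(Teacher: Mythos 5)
Your proposal is correct and follows essentially the same route as the paper's proof: the same exchange/swap argument (comparing the quadratic penalty at indices $i$ and $p$ with $\nu_{k,p}^j\ge\nu_{k,i}^j$) to show the optimal support is the top-$w_k$ prefix after sorting, followed by the same KKT/thresholding analysis, breakpoint search over $\zeta=-\nu_{k,i'}^j$, and reduction to the cubic that the paper imports from its proof of Lemma \ref{lemma: xupdateinfty}. Your explicit treatment of well-posedness of $\zeta$ and of ties in $\boldsymbol{\nu}_k^j$ is a minor polish on the paper's argument, not a different method.
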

\begin{proof}
	See Appendix \ref{Prooflemma3}.
\end{proof}
\begin{remark}
	It is worth noting that the $\boldsymbol{x}$-subproblem in \eqref{x-update1} can be solved in a parallel manner because it can be decomposed into several independent subproblems in \eqref{x-subproblemblock}, each of which involves only one block variable $\boldsymbol{x}_k$.
\end{remark}


Note that it is quite hard to obtain a closed-form solution to the subproblem in \eqref{eq: scaleadmmpro2} since it is a non-smooth problem with linear constraints. By introducing a new variable $t = \max_l \dfrac{y_l}{c_l}$, we transform it into an equivalent quadratic program 
\begin{align}
\begin{split}
\min_{t, \boldsymbol{y}} \quad & \phi(t,\boldsymbol{y}) =  \alpha t + \dfrac{\rho}{2}\| \boldsymbol{y} - \boldsymbol{Rx}^{j+1} + \boldsymbol{z}^j/\rho    \|^2 \\
\mbox{s.t.}\quad
& \boldsymbol{y}\leq t\boldsymbol{c}  \\
& \boldsymbol{y}\geq  \boldsymbol{0} \\
&  \boldsymbol{y}\leq \boldsymbol{c}.
\end{split}
\label{eq: y-updatesub}
\end{align}
The solution to problem \eqref{eq: y-updatesub} (equivalently problem \eqref{eq: scaleadmmpro2}) is given as follows:
\begin{lemma}
	Consider problem \eqref{eq: y-updatesub}. Let $t^*$ be the optimal solution to
	\begin{align}
	\min_t  \Phi(t) = \phi(t,\boldsymbol{y}(t))  \quad s.t. \quad 0 \leq t \leq 1,
	\label{eq: singlevar}
	\end{align}
	where $\boldsymbol{y}(t) = ( y_1(t), \cdots, y_L(t))^\top$ and $y_l(t), l=1,\cdots,L$ are defined by
	\begin{equation}
	y_l(t)=\left\{
	\begin{array}{ll}
	\theta^j_l, &   if \quad 0 \leq \theta^j _l \leq  \min(c_l, tc_l), \\
	\min(c_l, tc_l), &    if \quad  \theta^j _l >  \min(c_l, tc_l) ,\\
	0, &   otherwise,
	\end{array} \right.
	\label{eq: ysubsolution}
	\end{equation}
	where $\boldsymbol{\theta}^j = \boldsymbol{Rx}^{j+1} - {\boldsymbol{z}^j}/\rho$. 
	Then $(t^*, \boldsymbol{y}(t^*))$ is the optimal solution to problem  \eqref{eq: y-updatesub} and $\boldsymbol{y}(t^*)$ is the optimal solution to problem \eqref{eq: scaleadmmpro2}. Solving problem \eqref{eq: singlevar} is to minimize a single-variable convex function on a closed interval, which can be solved by golden section search and parabolic interpolation (e.g., the function “fminbnd” \cite{forsythe1977computer,brent2013algorithms} in MATLAB). 
	\label{lemma: y-subproblem}
\end{lemma}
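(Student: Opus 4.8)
The plan is to exploit the separable structure of \eqref{eq: y-updatesub} and reduce it to the one-dimensional problem \eqref{eq: singlevar} by partial minimization over $\boldsymbol{y}$. First I would fix $t$ and minimize $\phi(t,\boldsymbol{y})$ over $\boldsymbol{y}$ subject to the remaining constraints $\boldsymbol{y}\le t\boldsymbol{c}$ and $\boldsymbol{0}\le\boldsymbol{y}\le\boldsymbol{c}$. Since the $\alpha t$ term is constant in $\boldsymbol{y}$ and $\|\boldsymbol{y}-\boldsymbol{\theta}^j\|^2=\sum_{l=1}^L (y_l-\theta^j_l)^2$ is separable, the inner problem splits into $L$ independent scalar problems $\min_{0\le y_l\le \min(c_l,tc_l)} \tfrac{\rho}{2}(y_l-\theta^j_l)^2$. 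Each of these is the Euclidean projection of $\theta^j_l$ onto the interval $[0,\min(c_l,tc_l)]$, whose closed form is the clamp $y_l(t)=\min\{\max(\theta^j_l,0),\min(c_l,tc_l)\}$; expanding the three cases reproduces exactly formula \eqref{eq: ysubsolution}. This shows $\boldsymbol{y}(t)$ is the unique inner minimizer and that $\Phi(t)=\phi(t,\boldsymbol{y}(t))=\min_{\boldsymbol{y}}\phi(t,\boldsymbol{y})$ is the partial minimum for each fixed $t$.

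Next I would establish the equivalence of the joint and reduced problems. For any feasible $(t,\boldsymbol{y})$ of \eqref{eq: y-updatesub} we have $\phi(t,\boldsymbol{y})\ge\phi(t,\boldsymbol{y}(t))=\Phi(t)\ge\Phi(t^*)=\phi(t^*,\boldsymbol{y}(t^*))$, and $(t^*,\boldsymbol{y}(t^*))$ is itself feasible, so it is optimal for \eqref{eq: y-updatesub}. To justify restricting $t$ to $[0,1]$, note first that $\boldsymbol{y}\ge\boldsymbol{0}$ together with $\boldsymbol{y}\le t\boldsymbol{c}$ and $\boldsymbol{c}>\boldsymbol{0}$ forces $t\ge0$; and for any $t\ge1$ one has $\min(c_l,tc_l)=c_l$ for all $l$, so the feasible box for $\boldsymbol{y}$, and hence the quadratic part of $\Phi$, is frozen while the linear term $\alpha t$ strictly increases since $\alpha>0$. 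Therefore $\Phi$ is nondecreasing on $[1,\infty)$ and its minimizer over $t\ge0$ can be taken in $[0,1]$, which is the constraint in \eqref{eq: singlevar}. Since \eqref{eq: y-updatesub} is the epigraph reformulation of \eqref{eq: scaleadmmpro2} via $t=\max_l y_l/c_l$ (already derived above), the optimal $\boldsymbol{y}(t^*)$ also solves \eqref{eq: scaleadmmpro2}.

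Finally I would prove that $\Phi$ is convex on $[0,1]$, which I expect to be the main obstacle because $\boldsymbol{y}(t)$ is only piecewise smooth and has kinks where a coordinate enters or leaves its bound. The clean route is to avoid differentiating through these kinks and instead invoke the partial-minimization principle: the joint objective $\phi(t,\boldsymbol{y})=\alpha t+\tfrac{\rho}{2}\|\boldsymbol{y}-\boldsymbol{\theta}^j\|^2$ is jointly convex in $(t,\boldsymbol{y})$ (linear in $t$ plus a convex quadratic in $\boldsymbol{y}$), and the constraint set $\{(t,\boldsymbol{y}):0\le t\le1,\ \boldsymbol{0}\le\boldsymbol{y}\le\boldsymbol{c},\ \boldsymbol{y}\le t\boldsymbol{c}\}$ is convex; hence $\Phi(t)=\inf_{\boldsymbol{y}}\phi(t,\boldsymbol{y})$, being the partial minimization of a jointly convex function over a convex set, is convex in $t$. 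As a sanity check one can verify coordinatewise that each scalar value function is either constant, or a decreasing convex parabola $\tfrac{\rho}{2}(tc_l-\theta^j_l)^2$ that flattens to $0$ with matching first derivative at $t=\theta^j_l/c_l$, so the sum plus $\alpha t$ is convex and $C^1$. Convexity on the closed interval $[0,1]$ then guarantees that \eqref{eq: singlevar} is a one-dimensional convex minimization solvable by golden-section search with parabolic interpolation, completing the proof.
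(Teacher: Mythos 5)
Your proposal is correct and follows essentially the same route as the paper: reduce to the scalar projections in \eqref{eq: ysubsolution} by partial minimization in $\boldsymbol{y}$, chain the inequalities $\phi(t',\boldsymbol{y}')\geq\Phi(t')\geq\Phi(t^*)$ to transfer optimality, obtain convexity of $\Phi$ as the partial minimization of a jointly convex function over a convex set, and restrict to $[0,1]$ because $\Phi$ increases for $t>1$. You simply spell out the coordinatewise clamp and the $C^1$ sanity check that the paper leaves implicit.
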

\begin{proof}
	See Appendix \ref{Prooflemma4}.
\end{proof}

		\begin{figure}[]
	\centering  
	\includegraphics[scale=0.28]{./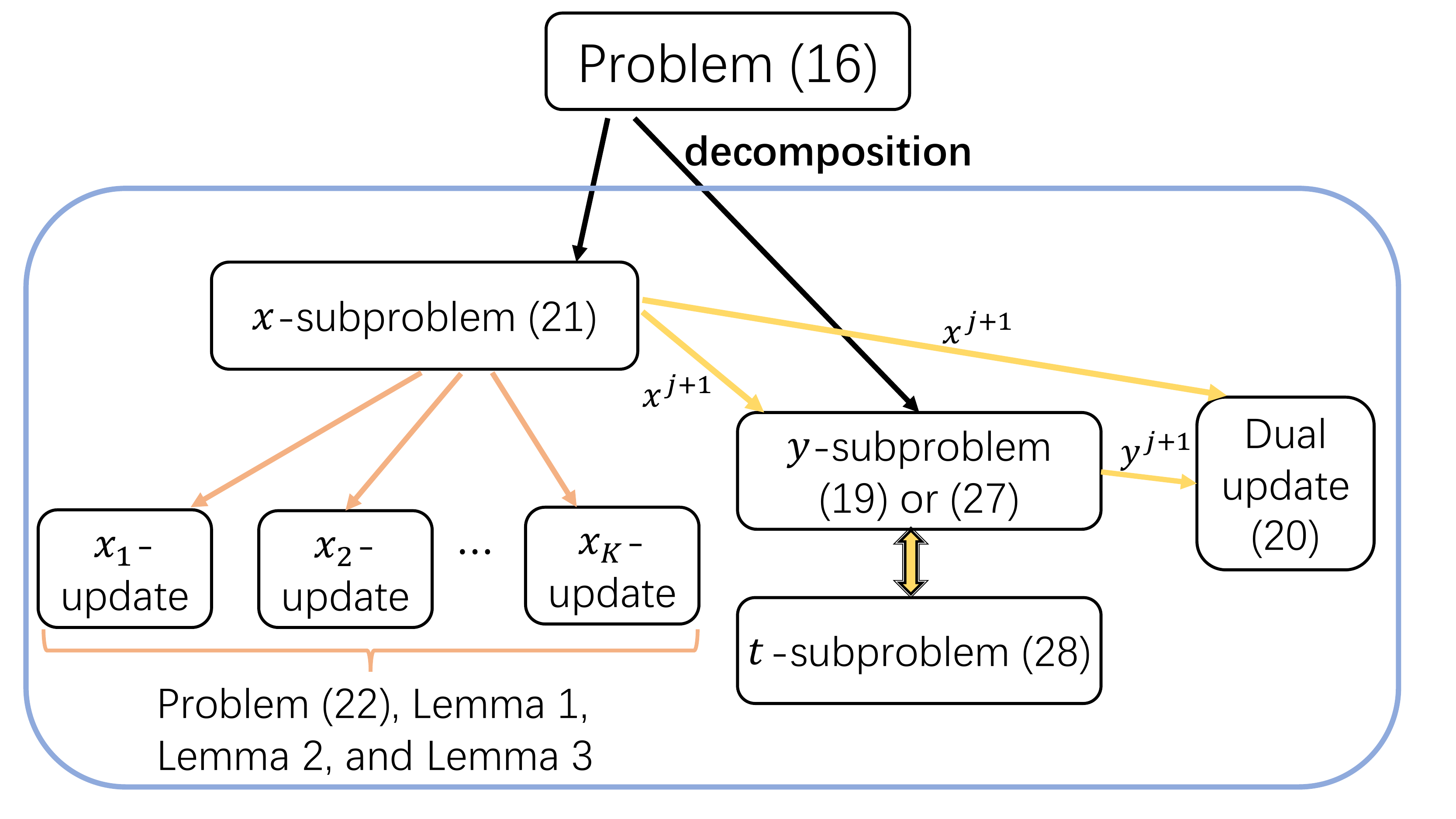} 
	\caption{Illustration of the decomposition procedure of our ADMM for MOPC.}  
	\label{fig: algprocedure}   
\end{figure}
  
   \figref{fig: algprocedure} gives the overall decomposition procedure using the ADMM approach on the MOPC. The complete procedure of ADMM for MOPC is summarized in Algorithm \ref{alg: ADMMforsrp}.
\begin{algorithm}[htbp]
	\caption{ADMM for the MOPC} 
	\label{alg: ADMMforsrp} 
	\begin{algorithmic}[1] 
		\REQUIRE  
		A randomly initialized point $(\boldsymbol{x}^0,  \boldsymbol{y}^0,\boldsymbol{z}^0)$, $w_k\, (k=1,2,\cdots,K)$, $j=0$, $\rho > 0$, and $\mu > \rho \|\boldsymbol{R} \|^2.$
		\ENSURE  
		The optimal bandwidth allocation and path selection vector $\boldsymbol{x}^*$ and corresponding $\boldsymbol{y}^*$.
		
		\WHILE{The stopping criteria (stated in Section \ref{subsec: ps}) are not satisfied}
		\STATE (Solve the $\boldsymbol{x}$-minimization problem in \eqref{x-subproblemblock})
		\FOR{$k = 1 \to K$}
		\IF{$w_k=1$}
		\STATE Solve the single-variable cubic equation in \eqref{eq: lemma2cubic} to get $\boldsymbol{x}^{j+1}_k$.
		\ELSIF{$1 < w_k < P_k$}
		\STATE Sort the elements of $\boldsymbol{\nu}^j_k$ in descending order.
		\STATE Try every positive $\zeta=-\nu^j_{k,i} (i=2,\cdots,w_k)$ to determine the corresponding interval in which $\zeta$ satisfies $\mu \zeta = U'_k \left( \sum_{i=1}^{w_k} \max(0, \nu^j_{k,i} + \zeta) \right) $.
\STATE Solve the single-variable cubic equation in \eqref{eq: numcubic3} or \eqref{eq: numcubic4}.
\STATE Get $x_{k,i}^{j+1}$ according to \eqref{eq: lemmaxs1} and \eqref{eq: lemmaxs3}. 
		\ELSE 
		\STATE Sort the elements of $\boldsymbol{\nu}^j_k$ in descending order.
\STATE Try every positive $\zeta=-\nu^j_{k,i} (i=2,\cdots,P_k)$ to determine the corresponding interval in which $\zeta$ satisfies $\mu \zeta = U'_k \left( \sum_{i=1}^{P_k} \max(0, \nu^j_{k,i} + \zeta) \right) $.
\STATE Solve the single-variable cubic equation in \eqref{eq: numcubic1} or \eqref{eq: numcubic2}.
\STATE Let $x_{k,i}^{j+1} = \max(0, \nu^j_{k,i} + \zeta), \quad \forall i$.
		\ENDIF
		\ENDFOR 
   		\STATE (Solve the $\boldsymbol{y}$-minimization problem \eqref{eq: scaleadmmpro2} according to Lemma \ref{lemma: y-subproblem})
		\STATE Solve the single-variable optimization problem in \eqref{eq: singlevar} using ``fminbnd'' in MATLAB to get the optimal $t$.
		\STATE Get $\boldsymbol{y}^{j+1}$ with \eqref{eq: ysubsolution}.
		\STATE Update the multiplier $\boldsymbol{z} $ according to \eqref{eq: scaleadmmpro3}.
				\STATE Set $j = j +1$.
		\ENDWHILE
	\end{algorithmic}
\end{algorithm}

\section{Convergence Analysis}
\label{Sec: 5}
In this section, we analyze the convergence property of the proposed algorithm in Algorithm \ref{alg: ADMMforsrp}. As for Algorithm \ref{alg: ADMM}, since the NUM in \eqref{eq: NUMproblem} is a convex problem, the convergence of Algorithm \ref{alg: ADMM} has already been well studied in \cite{boyd2011distributed} and \cite{chen2017note}.
For simplicity of notation, we rewrite problem \eqref{eq: problem2} as
\begin{align}
\begin{split}
\min_{\boldsymbol{x},\boldsymbol{y}} \quad & f(\boldsymbol{x}) + g(\boldsymbol{y}) + \mathcal{I}_{ \mathcal{Y} }(\boldsymbol{y}) + \mathcal{I}_{\mathcal{X} }(\boldsymbol{x}) \\
\mbox{s.t.}\quad
& \boldsymbol{y} = \boldsymbol{Rx},    
\end{split}
\label{eq: problemKKT}
\end{align}
where 
\begin{align*}
f(\boldsymbol{x})=  \sum_{k=1}^{K} -U_k( \sum_{i=1}^{P_k} x_{k,i}),
\end{align*} 
\begin{align*}
g(\boldsymbol{y}) = \alpha \max_l \dfrac{y_l}{c_l},
\end{align*}
and $\mathcal{I}_{ \mathcal{Y} }(\boldsymbol{y})$ as well as $\mathcal{I}_{\mathcal{X} }(\boldsymbol{x})$ are indicator functions defined in \eqref{eq: indicator}.
The augmented Lagrangian function can be rewritten as
\begin{align}
	L_\rho(\boldsymbol{x},\boldsymbol{y};\boldsymbol{z})  = f(\boldsymbol{x}) + g(\boldsymbol{y}) + \mathcal{I}_{ \mathcal{X} }(\boldsymbol{x})+ \mathcal{I}_{ \mathcal{Y}}(\boldsymbol{y})   \nonumber \\
	+ \boldsymbol{z}^\top ( \boldsymbol{y}-\boldsymbol{R}\boldsymbol{x} ) + \frac{\rho}{2} \| \boldsymbol{y}-\boldsymbol{R}\boldsymbol{x} \|^2. \label{eq: algconv}
\end{align}
Next, we show the boundedness of the augmented Lagrangian function, which plays an important role in our convergence analysis.
\begin{proposition}
	\label{propo: alg2albound}
	Assume that $\{\boldsymbol{z}^j \} $ is bounded, then the sequence of the augmented Lagrangian 	$\{L_\rho(\boldsymbol{x}^j,\boldsymbol{y}^j;\boldsymbol{z}^j)\}_{j=1}^{+\infty}$ is bounded below.
\end{proposition}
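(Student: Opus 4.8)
The plan is to evaluate $L_\rho$ along the iterates and bound it below by a single coercive one-dimensional function of $\|\boldsymbol{R}\boldsymbol{x}^j\|$. First I would observe that every update keeps $\boldsymbol{x}^j\in\mathcal{X}$ and $\boldsymbol{y}^j\in\mathcal{Y}$, so the indicator terms $\mathcal{I}_{\mathcal{X}}(\boldsymbol{x}^j)$ and $\mathcal{I}_{\mathcal{Y}}(\boldsymbol{y}^j)$ vanish. Since $\boldsymbol{y}^j\in\mathcal{Y}=\{\boldsymbol{0}\le\boldsymbol{y}\le\boldsymbol{c}\}$, the sequence $\{\boldsymbol{y}^j\}$ is automatically bounded and $g(\boldsymbol{y}^j)=\alpha\max_l y_l^j/c_l\ge 0$. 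Hence it suffices to bound from below $f(\boldsymbol{x}^j)+(\boldsymbol{z}^j)^\top(\boldsymbol{y}^j-\boldsymbol{R}\boldsymbol{x}^j)+\frac{\rho}{2}\|\boldsymbol{y}^j-\boldsymbol{R}\boldsymbol{x}^j\|^2$, in which the only quantity not known a priori to be bounded is $\boldsymbol{x}^j$, and thus $\boldsymbol{R}\boldsymbol{x}^j$.

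Next I would expand the quadratic and collect the terms that stay bounded. Writing $u=\|\boldsymbol{R}\boldsymbol{x}^j\|$, the linear and quadratic coupling combine into $\frac{\rho}{2}u^2-(\rho\boldsymbol{y}^j+\boldsymbol{z}^j)^\top\boldsymbol{R}\boldsymbol{x}^j+(\boldsymbol{z}^j)^\top\boldsymbol{y}^j+\frac{\rho}{2}\|\boldsymbol{y}^j\|^2$, where the last two summands are bounded because $\{\boldsymbol{y}^j\}$ and $\{\boldsymbol{z}^j\}$ are bounded. Applying Cauchy--Schwarz with $M=\sup_j\|\rho\boldsymbol{y}^j+\boldsymbol{z}^j\|<\infty$ gives the lower estimate $\frac{\rho}{2}u^2-Mu$ for the coupling, so that $L_\rho(\boldsymbol{x}^j,\boldsymbol{y}^j;\boldsymbol{z}^j)\ge f(\boldsymbol{x}^j)+\frac{\rho}{2}u^2-Mu+C$ for a constant $C$ independent of $j$.

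The remaining work is to control $f(\boldsymbol{x}^j)=\sum_k\bigl(-\beta\log\|\boldsymbol{x}_k^j\|_1+s_k/\|\boldsymbol{x}_k^j\|_1\bigr)$. Since $s_k/\|\boldsymbol{x}_k^j\|_1\ge 0$ and $\|\boldsymbol{x}_k^j\|_1\le\|\boldsymbol{x}^j\|_1$, I would use $f(\boldsymbol{x}^j)\ge-\beta K\log\|\boldsymbol{x}^j\|_1$ (the degenerate case where some $\|\boldsymbol{x}_k^j\|_1=0$ makes $f=+\infty$ and is trivial). The crucial structural step is to relate $\|\boldsymbol{x}^j\|_1$ back to $u$: because $\boldsymbol{x}^j\ge\boldsymbol{0}$ and every path traverses at least one link, so that every column of the $\{0,1\}$-valued matrix $\boldsymbol{R}$ is nonzero, one has $(\boldsymbol{R}\boldsymbol{x})_l\ge x_{k,i}$ for a link $l$ lying on path $i$ of flow $k$, which yields $\|\boldsymbol{R}\boldsymbol{x}^j\|_\infty\ge\|\boldsymbol{x}^j\|_\infty$ and hence $\|\boldsymbol{x}^j\|_1\le P\|\boldsymbol{R}\boldsymbol{x}^j\|$. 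Substituting gives $f(\boldsymbol{x}^j)\ge-\beta K\log P-\beta K\log u$, so that $L_\rho$ is bounded below by $h(u):=\frac{\rho}{2}u^2-Mu-\beta K\log u$ plus constants. Since $h(u)\to+\infty$ both as $u\to0^+$ and as $u\to+\infty$, it attains a finite minimum on $(0,+\infty)$, which establishes the claim.

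I expect the main obstacle to be precisely this coercivity argument: $f$ decreases to $-\infty$ as $\|\boldsymbol{x}^j\|\to\infty$ (although only logarithmically), while the augmented penalty grows quadratically only in $\boldsymbol{R}\boldsymbol{x}^j$ rather than directly in $\boldsymbol{x}^j$. The argument therefore hinges on excluding the possibility that $\boldsymbol{x}^j$ escapes to infinity inside the nonnegative part of the null space of $\boldsymbol{R}$; the nonnegativity of the iterates together with the nonzero-column property of the routing matrix is exactly what forecloses this and guarantees that the quadratic term dominates the logarithm.
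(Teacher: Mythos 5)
Your proof is correct and follows essentially the same route as the paper's: both argue that the augmented Lagrangian is coercive in $\|\boldsymbol{R}\boldsymbol{x}^j\|$ because the quadratic penalty dominates the logarithmic decrease of $f$, with $\{\boldsymbol{y}^j\}$ and $\{\boldsymbol{z}^j\}$ bounded and the remaining terms nonnegative or bounded. The paper merely asserts that $f(\boldsymbol{x}^j)+\frac{\rho}{2}\|\boldsymbol{y}^j-\boldsymbol{R}\boldsymbol{x}^j+\boldsymbol{z}^j/\rho\|^2\to+\infty$ as $x^j_{k,i}\to\infty$, whereas you supply the two details that make this true --- the nonnegativity/nonzero-column bound $\|\boldsymbol{x}^j\|_1\le P\|\boldsymbol{R}\boldsymbol{x}^j\|$ and the explicit one-dimensional minorant $\frac{\rho}{2}u^2-Mu-\beta K\log u$ --- so your writeup is, if anything, more complete than the paper's.
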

\begin{proof}
	See Appendix \ref{Proofpropo1}.
\end{proof}
Even through the function $\mathcal{I}_{\mathcal{X} }(\boldsymbol{x})$ is non-smooth and non-convex, we  know from \cite{attouch2013convergence} and \cite{bolte2014proximal} that there exists \textit{limiting subdifferential} or simply the \emph{subdifferential} for it, written as $\partial \mathcal{I}_{\mathcal{X} }(\boldsymbol{x})$.

In the rest of this paper, for simplicity, the following notations are used for the successive errors of the iterative sequence:
\begin{align*}
E_{\boldsymbol{x}}^{j+1}\coloneqq \boldsymbol{x}^{j+1}- \boldsymbol{x}^j,  E_{\boldsymbol{y}}^{j+1}\coloneqq \boldsymbol{y}^{j+1}- \boldsymbol{y}^j, E_{\boldsymbol{z}}^{j+1}\coloneqq \boldsymbol{z}^{j+1}- \boldsymbol{z}^j.
\end{align*}

Before proceeding with our proof, we first show the KKT conditions of problem in
\eqref{eq: problemKKT}  and the optimality conditions of the subproblems of ADMM. The KKT conditions of problem \eqref{eq: problemKKT} are there exists $ \hat{\boldsymbol{z}}^* \in \mathcal{R}^L$, such that
\begin{subequations}
\begin{align}
	&\boldsymbol{ 0} \in \nabla f(\boldsymbol{x}^*) + \partial \mathcal{I}_{\mathcal{X}}(\boldsymbol{x}^*)-\boldsymbol{ R}^\top \hat{\boldsymbol{ z}}^*, \label{eq: KKTx} \\
	&\boldsymbol{ 0} \in \partial g(\boldsymbol{y}^*) + \partial \mathcal{I}_{\mathcal{Y}}(\boldsymbol{y}^*) + \hat{\boldsymbol{z}}^*, \label{eq: KKTy}\\
	&\boldsymbol{y}^* = \boldsymbol{R}\boldsymbol{x}^*. \label{eq: KKTcon}
\end{align}
\end{subequations}
The first-order optimality condition of $\boldsymbol{x}$-subproblem in \eqref{x-update1} is
\begin{align}
	&  \exists \, \tilde{\boldsymbol{x}}^{j+1} \in \partial \mathcal{I}_{ \mathcal{X} }(\boldsymbol{ x}^{j+1}), \text{such that} \nonumber \\
	& \nabla f(\boldsymbol{x}^{j+1}) + \tilde{\boldsymbol{x}}^{j+1} 
	- \rho\boldsymbol{q}^j + \mu (\boldsymbol{x}^{j+1} - \boldsymbol{x}^j)= \boldsymbol{0} ,
	\label{eq: optx}
\end{align}

The first-order optimality condition of $\boldsymbol{y}$-subproblem in \eqref{eq: scaleadmmpro2} is
\begin{align}
&  \exists \, \tilde{\boldsymbol{g}}^{j+1} \in \partial g(\boldsymbol{y}^{j+1}), \text{and}\; \exists\, \tilde{\boldsymbol{y}}^{j+1} \in \partial \mathcal{I}_{ \mathcal{Y} }(\boldsymbol{y}^{j+1}), \text{such that} \nonumber \\
& \tilde{\boldsymbol{g}}^{j+1} + \tilde{\boldsymbol{y}}^{j+1} + \rho ( \boldsymbol{y}^{j+1} - \boldsymbol{Rx}^{j+1} + \boldsymbol{z}^j/\rho) =\boldsymbol{0},
\label{eq: opty1}
\end{align}
which is equivalent to
\begin{align}
	\tilde{\boldsymbol{g}}^{j+1} + \tilde{\boldsymbol{y}}^{j+1} +  \boldsymbol{z}^{j+1} =\boldsymbol{0}.
	\label{eq: opty2}
\end{align}
Next, we estimate the differences of the augmented Lagrangian function values between two successive iterations of Algorithm \ref{alg: ADMMforsrp} to show the sufficient decrease.
\begin{lemma}
	\label{lemma: weakdc}
	For all $j \geq 0$, we have 
	\begin{align}
	& L_\rho(\boldsymbol{x}^{j},\boldsymbol{y}^j;\boldsymbol{z}^j) - L_\rho(\boldsymbol{x}^{j+1},\boldsymbol{y}^{j+1};\boldsymbol{z}^{j+1} )  + \frac{1}{\rho} \| E_{\boldsymbol{z}}^{j+1}  \|^2  \nonumber \\
	&\geq  \frac{\mu - \rho \| \boldsymbol{R} \|_2^2}{2} \| E_{\boldsymbol{x}}^{j+1} \|^2 + \frac{\rho}{2} \| E_{\boldsymbol{y}}^{j+1} \|^2.
	\label{eq: weakdc}
	\end{align}
\end{lemma}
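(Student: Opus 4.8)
The plan is to decompose the one-step change of the augmented Lagrangian into three telescoping pieces, one for each block update, and to lower-bound each piece separately. Writing
\[
L_\rho(\boldsymbol{x}^{j},\boldsymbol{y}^j;\boldsymbol{z}^j) - L_\rho(\boldsymbol{x}^{j+1},\boldsymbol{y}^{j+1};\boldsymbol{z}^{j+1}) = \Delta_{\boldsymbol{x}} + \Delta_{\boldsymbol{y}} + \Delta_{\boldsymbol{z}},
\]
I set $\Delta_{\boldsymbol{x}} = L_\rho(\boldsymbol{x}^{j},\boldsymbol{y}^j;\boldsymbol{z}^j) - L_\rho(\boldsymbol{x}^{j+1},\boldsymbol{y}^j;\boldsymbol{z}^j)$, $\Delta_{\boldsymbol{y}} = L_\rho(\boldsymbol{x}^{j+1},\boldsymbol{y}^j;\boldsymbol{z}^j) - L_\rho(\boldsymbol{x}^{j+1},\boldsymbol{y}^{j+1};\boldsymbol{z}^j)$, and $\Delta_{\boldsymbol{z}} = L_\rho(\boldsymbol{x}^{j+1},\boldsymbol{y}^{j+1};\boldsymbol{z}^j) - L_\rho(\boldsymbol{x}^{j+1},\boldsymbol{y}^{j+1};\boldsymbol{z}^{j+1})$. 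The multiplier term is the easiest: since $L_\rho$ depends on $\boldsymbol{z}$ only through the linear coupling $\boldsymbol{z}^\top(\boldsymbol{y}-\boldsymbol{Rx})$, and the update \eqref{eq: scaleadmmpro3} gives $\boldsymbol{y}^{j+1}-\boldsymbol{Rx}^{j+1} = \tfrac{1}{\rho}E_{\boldsymbol{z}}^{j+1}$, I obtain $\Delta_{\boldsymbol{z}} = (\boldsymbol{z}^j-\boldsymbol{z}^{j+1})^\top(\boldsymbol{y}^{j+1}-\boldsymbol{Rx}^{j+1}) = -\tfrac{1}{\rho}\|E_{\boldsymbol{z}}^{j+1}\|^2$. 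This is the only negative contribution, and it is exactly the quantity carried to the left-hand side of \eqref{eq: weakdc}.

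For $\Delta_{\boldsymbol{y}}$ I would use strong convexity. The map $\boldsymbol{y}\mapsto L_\rho(\boldsymbol{x}^{j+1},\boldsymbol{y};\boldsymbol{z}^j)$ is the sum of the convex function $g$, the convex indicator $\mathcal{I}_{\mathcal{Y}}$, a linear term, and the quadratic $\tfrac{\rho}{2}\|\boldsymbol{y}-\boldsymbol{Rx}^{j+1}\|^2$, hence it is $\rho$-strongly convex in $\boldsymbol{y}$. Since $\boldsymbol{y}^{j+1}$ is its global minimizer (so $\boldsymbol{0}$ lies in its subdifferential at $\boldsymbol{y}^{j+1}$), the strong-convexity inequality yields directly $\Delta_{\boldsymbol{y}} \geq \tfrac{\rho}{2}\|E_{\boldsymbol{y}}^{j+1}\|^2$.

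The term $\Delta_{\boldsymbol{x}}$ is the main obstacle, because $f+\mathcal{I}_{\mathcal{X}}$ is nonconvex and $\boldsymbol{x}^{j+1}$ is \emph{not} a minimizer of the true $\boldsymbol{x}$-block of $L_\rho$ but of the linearized-plus-proximal surrogate in \eqref{x-update1}. The idea is to isolate the smooth coupling part $\Theta(\boldsymbol{x}) := \boldsymbol{z}^{j\top}(\boldsymbol{y}^j-\boldsymbol{Rx}) + \tfrac{\rho}{2}\|\boldsymbol{y}^j-\boldsymbol{Rx}\|^2$, which is an \emph{exact} quadratic with $\nabla\Theta(\boldsymbol{x}^j) = -\rho\boldsymbol{q}^j$, so that
\[
\Theta(\boldsymbol{x}^{j+1}) = \Theta(\boldsymbol{x}^j) + \langle \nabla\Theta(\boldsymbol{x}^j),\, E_{\boldsymbol{x}}^{j+1}\rangle + \tfrac{\rho}{2}\|\boldsymbol{R}E_{\boldsymbol{x}}^{j+1}\|^2
\]
holds with equality. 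Global optimality of $\boldsymbol{x}^{j+1}$ for the surrogate gives $S(\boldsymbol{x}^{j+1})\le S(\boldsymbol{x}^j)$, i.e.
\[
[f+\mathcal{I}_{\mathcal{X}}](\boldsymbol{x}^{j+1}) + \langle \nabla\Theta(\boldsymbol{x}^j),\, E_{\boldsymbol{x}}^{j+1}\rangle + \tfrac{\mu}{2}\|E_{\boldsymbol{x}}^{j+1}\|^2 \le [f+\mathcal{I}_{\mathcal{X}}](\boldsymbol{x}^j).
\]
Adding this to the $\Theta$ identity causes the troublesome nonconvex difference $[f+\mathcal{I}_{\mathcal{X}}](\boldsymbol{x}^j)-[f+\mathcal{I}_{\mathcal{X}}](\boldsymbol{x}^{j+1})$ and the cross term $\langle\nabla\Theta(\boldsymbol{x}^j),E_{\boldsymbol{x}}^{j+1}\rangle$ to cancel, leaving $\Delta_{\boldsymbol{x}} \ge \tfrac{\mu}{2}\|E_{\boldsymbol{x}}^{j+1}\|^2 - \tfrac{\rho}{2}\|\boldsymbol{R}E_{\boldsymbol{x}}^{j+1}\|^2$. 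Bounding $\|\boldsymbol{R}E_{\boldsymbol{x}}^{j+1}\|^2 \le \|\boldsymbol{R}\|_2^2\|E_{\boldsymbol{x}}^{j+1}\|^2$ then gives $\Delta_{\boldsymbol{x}} \ge \tfrac{\mu-\rho\|\boldsymbol{R}\|_2^2}{2}\|E_{\boldsymbol{x}}^{j+1}\|^2$, where the assumption $\mu > \rho\|\boldsymbol{R}\|_2^2$ guarantees a positive coefficient.

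Finally, I would add the three bounds $\Delta_{\boldsymbol{x}}+\Delta_{\boldsymbol{y}}+\Delta_{\boldsymbol{z}}$ and move the term $\Delta_{\boldsymbol{z}}=-\tfrac{1}{\rho}\|E_{\boldsymbol{z}}^{j+1}\|^2$ to the left-hand side, which reproduces \eqref{eq: weakdc} exactly. The delicate point to get right is the $\Delta_{\boldsymbol{x}}$ estimate: the nonconvexity of $f+\mathcal{I}_{\mathcal{X}}$ forces us to rely on global optimality of the surrogate minimizer and the exactness of the quadratic expansion of $\Theta$, rather than on any convexity or descent-lemma argument applied to the full $\boldsymbol{x}$-block of $L_\rho$.
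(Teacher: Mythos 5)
Your proposal is correct and follows essentially the same route as the paper's proof: the same three-term telescoping decomposition over the $\boldsymbol{x}$-, $\boldsymbol{y}$-, and $\boldsymbol{z}$-updates, with the surrogate optimality of \eqref{x-update1} combined with a quadratic bound on the coupling term for the $\boldsymbol{x}$-block, and the same exact computation for the $\boldsymbol{z}$-block. The only cosmetic differences are that you invoke $\rho$-strong convexity for the $\boldsymbol{y}$-block where the paper spells out the variational inequality plus the norm identity, and you use the exact quadratic expansion of the coupling term where the paper applies the descent lemma with Lipschitz constant $\rho\|\boldsymbol{R}\|_2^2$; both yield the identical final bound.
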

\begin{proof}
	See Appendix \ref{Prooflemma5}.
\end{proof}

At this point, we readily have the following theorem regarding the optimality of the output of Algorithm \ref{alg: ADMMforsrp}.
\begin{theorem} \label{thm:conv}
	Let $ \{ \boldsymbol{x}^j,\boldsymbol{y}^j,\boldsymbol{z}^j \}_{j=1}^{+\infty}$ be a sequence generated by Algorithm \ref{alg: ADMMforsrp} and assume that
	\begin{align*}
		\sum_{j=1}^{\infty} \| E_{\boldsymbol{z}}^{j+1}\|^2 < \infty \text{ and } \{ \boldsymbol{ z}^j\} \text{ is bounded}.
	\end{align*}
	Then, the sequence $ \{ \boldsymbol{x}^j,\boldsymbol{y}^j \}_{j=1}^{+\infty}$ is bounded and any limiting point $\{ \overline{\boldsymbol{x}},\overline{ \boldsymbol{y}} \}$ of $ \{ \boldsymbol{x}^j,\boldsymbol{y}^j \}_{j=1}^{+\infty}$ is a stationary point of problem in \eqref{eq: problemKKT}.
\end{theorem}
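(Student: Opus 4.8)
The plan is to combine the sufficient–decrease estimate of Lemma \ref{lemma: weakdc} with the lower boundedness of the augmented Lagrangian (Proposition \ref{propo: alg2albound}) to drive all successive errors to zero, then to extract a convergent subsequence and pass to the limit in the first–order optimality conditions \eqref{eq: optx} and \eqref{eq: opty2} to recover the KKT system \eqref{eq: KKTx}--\eqref{eq: KKTcon}. First I would sum the inequality \eqref{eq: weakdc} over $j=0,\dots,N$. Since $\mu>\rho\|\boldsymbol{R}\|_2^2$, the right–hand side carries strictly positive coefficients, while the telescoping left–hand side is controlled because $\{L_\rho(\boldsymbol{x}^j,\boldsymbol{y}^j;\boldsymbol{z}^j)\}$ is bounded below by Proposition \ref{propo: alg2albound} and $\sum_j\|E_{\boldsymbol{z}}^{j+1}\|^2<\infty$ by hypothesis. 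Letting $N\to\infty$ yields $\sum_j\|E_{\boldsymbol{x}}^{j+1}\|^2<\infty$ and $\sum_j\|E_{\boldsymbol{y}}^{j+1}\|^2<\infty$, hence $E_{\boldsymbol{x}}^{j+1}\to\boldsymbol{0}$, $E_{\boldsymbol{y}}^{j+1}\to\boldsymbol{0}$, and $E_{\boldsymbol{z}}^{j+1}\to\boldsymbol{0}$.

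Next I would establish boundedness. The iterates $\boldsymbol{y}^j$ lie in the compact set $\mathcal{Y}=\{\boldsymbol{0}\le\boldsymbol{y}\le\boldsymbol{c}\}$, so $\{\boldsymbol{y}^j\}$ is bounded; rewriting the multiplier update \eqref{eq: scaleadmmpro3} as $\boldsymbol{R}\boldsymbol{x}^{j+1}=\boldsymbol{y}^{j+1}-\tfrac{1}{\rho}E_{\boldsymbol{z}}^{j+1}$ shows $\{\boldsymbol{R}\boldsymbol{x}^j\}$ is bounded. Because every column of the routing matrix $\boldsymbol{R}$ is nonzero (each path traverses at least one link) and $\boldsymbol{x}^j\ge\boldsymbol{0}$, each coordinate $x_{k,i}^j$ is dominated by the corresponding component of $\boldsymbol{R}\boldsymbol{x}^j$, so $\{\boldsymbol{x}^j\}$ is bounded as well. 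Bolzano--Weierstrass then provides a subsequence $\{(\boldsymbol{x}^{j_m},\boldsymbol{y}^{j_m},\boldsymbol{z}^{j_m})\}$ converging to some $(\overline{\boldsymbol{x}},\overline{\boldsymbol{y}},\overline{\boldsymbol{z}})$ (the last using the assumed boundedness of $\{\boldsymbol{z}^j\}$), and since the successive errors vanish, the shifted iterates $\boldsymbol{x}^{j_m+1},\boldsymbol{y}^{j_m+1},\boldsymbol{z}^{j_m+1}$ share the same limit.

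To recover the KKT conditions I would pass to the limit along this subsequence. Feasibility \eqref{eq: KKTcon} is immediate from $\boldsymbol{y}^{j+1}-\boldsymbol{R}\boldsymbol{x}^{j+1}=\tfrac{1}{\rho}E_{\boldsymbol{z}}^{j+1}\to\boldsymbol{0}$. For \eqref{eq: KKTx}, a short computation using \eqref{eq: scaleadmmpro3} gives $\rho\boldsymbol{q}^j=\boldsymbol{R}^\top(2\boldsymbol{z}^j-\boldsymbol{z}^{j-1})$, so the optimality condition \eqref{eq: optx} rearranges to
\begin{align*}
\nabla f(\boldsymbol{x}^{j+1})+\tilde{\boldsymbol{x}}^{j+1}-\boldsymbol{R}^\top\boldsymbol{z}^{j+1}=\boldsymbol{R}^\top\big(E_{\boldsymbol{z}}^{j}-E_{\boldsymbol{z}}^{j+1}\big)-\mu E_{\boldsymbol{x}}^{j+1},
\end{align*}
whose right–hand side tends to $\boldsymbol{0}$. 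Taking $j=j_m\to\infty$ and invoking continuity of $\nabla f$ together with the graph–closedness (outer semicontinuity) of the limiting subdifferential $\partial\mathcal{I}_{\mathcal{X}}$ yields $\boldsymbol{0}\in\nabla f(\overline{\boldsymbol{x}})+\partial\mathcal{I}_{\mathcal{X}}(\overline{\boldsymbol{x}})-\boldsymbol{R}^\top\overline{\boldsymbol{z}}$. Similarly, passing to the limit in \eqref{eq: opty2} and using the graph–closedness of $\partial g$ and $\partial\mathcal{I}_{\mathcal{Y}}$ gives $\boldsymbol{0}\in\partial g(\overline{\boldsymbol{y}})+\partial\mathcal{I}_{\mathcal{Y}}(\overline{\boldsymbol{y}})+\overline{\boldsymbol{z}}$. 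Setting $\hat{\boldsymbol{z}}^*=\overline{\boldsymbol{z}}$ recovers \eqref{eq: KKTx}--\eqref{eq: KKTcon}.

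The \textbf{main obstacle} is this final limit passage in the nonsmooth, nonconvex inclusion, where two points require care. First, $\nabla f$ is singular wherever a block vanishes, since $U_k'(t)\to+\infty$ as $t\to0^+$; to guarantee that $\nabla f(\overline{\boldsymbol{x}})$ exists and that $\nabla f(\boldsymbol{x}^{j_m+1})\to\nabla f(\overline{\boldsymbol{x}})$, I would show that the $\boldsymbol{x}$–update of Lemmas \ref{lemma: xupdate1}--\ref{lemma: xupdate2} keeps each block norm uniformly bounded away from zero — the relevant cubic takes the value $-s_k<0$ at the origin and has coefficients bounded uniformly in $j$ (as $\boldsymbol{\nu}_k^j$ is bounded), so its maximal positive root, and hence $\|\boldsymbol{x}_k^{j}\|_1$, exceeds a fixed $\delta>0$, giving $\|\overline{\boldsymbol{x}}_k\|_1\ge\delta>0$. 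Second, because $\mathcal{X}$ is nonconvex (the cardinality constraint), I must rely on the outer semicontinuity of the \emph{limiting} subdifferential rather than on convex subgradient inequalities; the convergence of $\tilde{\boldsymbol{x}}^{j_m+1}$ follows from the displayed identity once $\nabla f(\boldsymbol{x}^{j_m+1})$ and $\boldsymbol{z}^{j_m+1}$ converge, and its limit lies in $\partial\mathcal{I}_{\mathcal{X}}(\overline{\boldsymbol{x}})$ by the closedness results of \cite{attouch2013convergence,bolte2014proximal}.
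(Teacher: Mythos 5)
Your proposal is correct and follows essentially the same route as the paper: telescope the sufficient-decrease inequality of Lemma \ref{lemma: weakdc} against the lower bound of Proposition \ref{propo: alg2albound} to kill the successive errors, deduce boundedness from $\mathcal{Y}$ and the multiplier update, and pass to the limit in \eqref{eq: optx} and \eqref{eq: opty2} via the closedness of the limiting subdifferential. In fact you supply two details the paper glosses over — the explicit identity $\nabla f(\boldsymbol{x}^{j+1})+\tilde{\boldsymbol{x}}^{j+1}-\boldsymbol{R}^\top\boldsymbol{z}^{j+1}=\boldsymbol{R}^\top(E_{\boldsymbol{z}}^{j}-E_{\boldsymbol{z}}^{j+1})-\mu E_{\boldsymbol{x}}^{j+1}$ and the uniform lower bound $\|\boldsymbol{x}_k^{j}\|_1\geq\delta>0$ needed because $U_k'$ blows up at the origin — both of which are valid and strengthen the argument.
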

\begin{proof}
	See Appendix \ref{Prooftheorem1}.
\end{proof}
\begin{remark}
	Note that our problem in \eqref{eq: problem2} or \eqref{eq: problemKKT} is non-convex and non-smooth, which is in general almost impossible to get a global optimal solution. The above theorem means that any accumulation point that our algorithm generates is a critical point, which is suboptimal to our problem in \eqref{eq: problem2} or \eqref{eq: problemKKT}.
\end{remark}

\section{Numerical Simulations}
\label{Sec: 6}
In this section, we perform numerical simulations to illustrate the performance gain of our algorithm by comparing several baselines. We first consider a baseline by adopting the convex relaxation approach in \cite{bi2016network} to solve our MOPC in \eqref{eq: MRproblem}. Specifically, we relax the $\ell_0$-norm constraints in \eqref{eq: MRproblem} to linear constraints using the relaxation technique in \cite{bi2016network}. First the nonlinear term $\max_l \frac{\boldsymbol{ R}[l]\boldsymbol{x}}{c_l}$ in the objective function is transformed to a linear objective function and some constraints by introducing an auxiliary variable $t$. After such linear relaxation and linear representation, \eqref{eq: MRproblem} is transformed as
\begin{align}
\begin{split}
\min_{\boldsymbol{x},t} \quad & -\sum_{k=1}^{K} U_k(\|\boldsymbol{x}_k\|_1) + \alpha t \\
\mbox{s.t.}\quad
&\boldsymbol{Rx} \leq t\boldsymbol{c}  \\
& \sum_{i=1}^{P_k} \frac{x_{k,i}}{\hat{c}_{k,i}} \leq w_k ,\quad \forall \, k \\
& \boldsymbol{x} \geq \boldsymbol{0} \\
& 0 \leq t \leq 1,
\end{split}
\label{eq: cvr3}
\end{align}
 where $\hat{c}_{k,i} = \min_l \{c_l | R_{l,i}^k =1 \}$ can be interpreted as the capacity of the bottleneck link along path $i$ of $k$-th flow \cite{bi2016network}. We solve the above problem via the conditional gradient method (also called the Frank-Wolfe method) \cite{jaggi2013revisiting}, which serves as one of our baselines.

\subsection{Dataset Descriptions} We perform our simulations based on the data obtained from practical real WAN. Specifically, the flow size vector $\boldsymbol{s}$, the link capacity vector $\boldsymbol{c}$, the routing matrix $\boldsymbol{R}$ and vector $\boldsymbol{w}$ (the number of paths each flow is allowed to use) are configured as follows:
\begin{itemize}
	\item $\boldsymbol{R}$: $\boldsymbol{R}\in \mathbb{R}^{L\times P}$ with $L=460$, and $P = 19751$ from a practical network.
	\item $P_k,\,(k=1,\cdots,K)$: Ranging from 4 to 100. The cumulative distribution function of $\{P_k\}_{k=1}^K$ is plotted in \figref{fig: pkdist}.
	\item $\boldsymbol{c}$: Ranging from $1.024 \times 10^9$ bits/sec to $2.048 \times 10^{11}$ bits/sec. The cumulative distribution function of link capacity is plotted in \figref{fig: cdist}.

	\item $\boldsymbol{s}$: We generate a total number of 561 aggregate flows and the size of each is constructed as follows:
	First we generate several flows (the number of flows is randomly chosen from $[10^3,10^4]$) according to one kind of flow size distribution randomly chosen from the  five typical
	flow size distributions \cite{montazeri2018homa} available, which are
	plotted in \figref{fig: flowdist}. Next those generated flows are aggregated to be one flow and the size of each aggregate flow is the total size of flows in that aggregate one. Our generated $\boldsymbol{s}$ is plotted in \figref{fig: sdist}.
	\item $\boldsymbol{w}$: We generate $w_k$ randomly from $\{1,2,3\}$.

\end{itemize}
\begin{figure}[]
	\centering
	\subfigure[The cumulative distribution function of the number of total available paths $P_k,\;(k=1,\cdots,K)$.]{
		\includegraphics[width=7cm]{./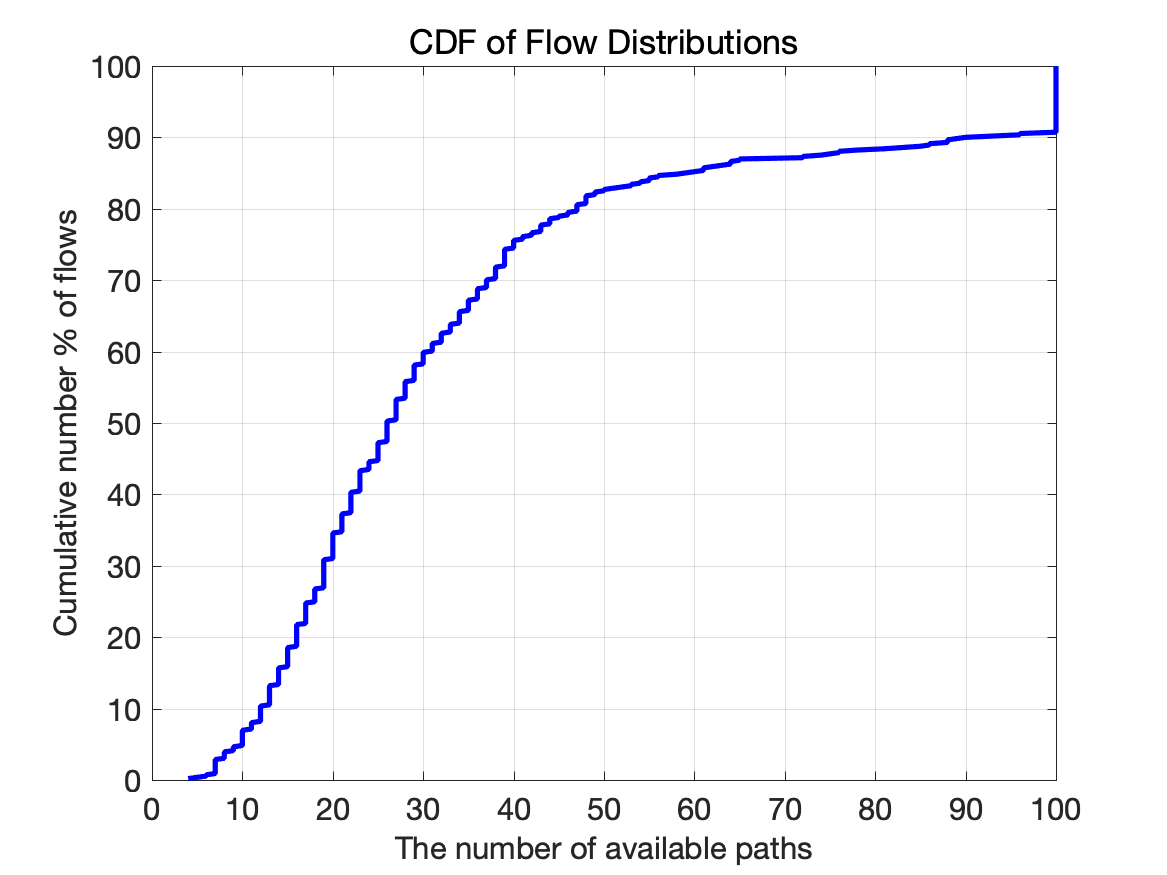}
		\label{fig: pkdist}   
	}
	\quad
	\subfigure[The cumulative distribution function of link capacity.]{
		\includegraphics[width=7cm]{./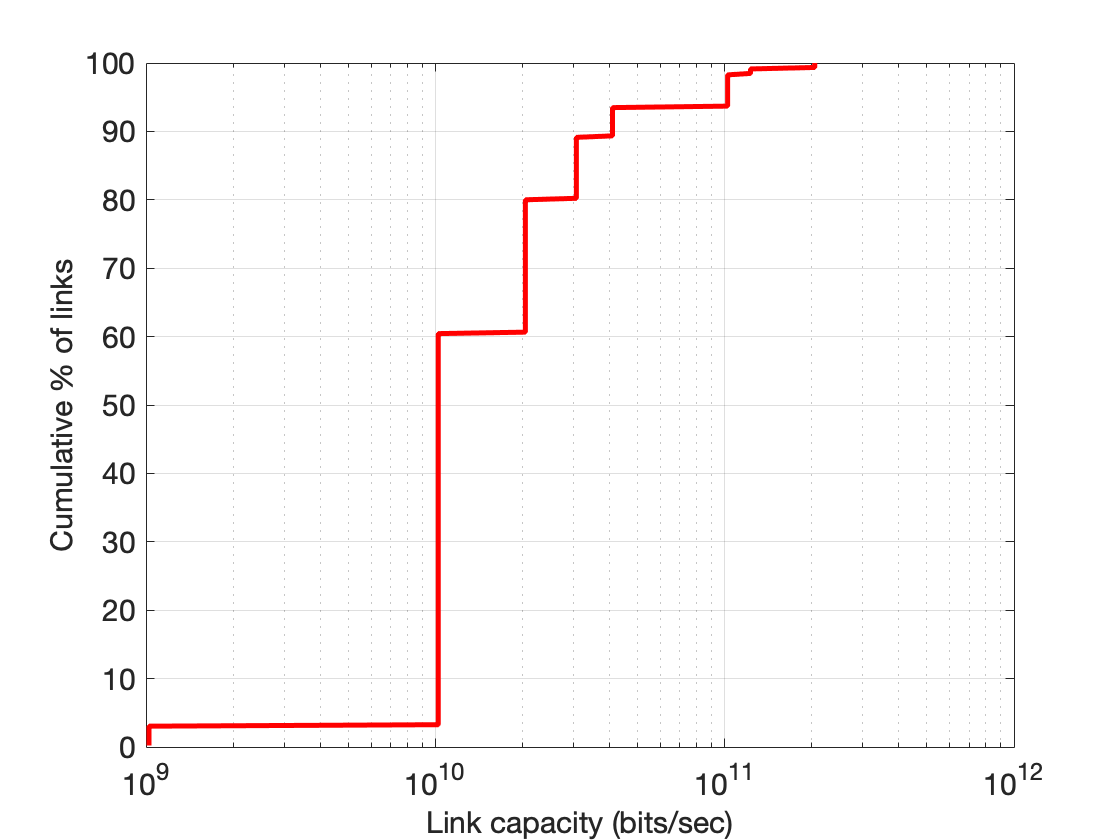}
		 \label{fig: cdist}   
	}
\caption{Illustrations of the total available paths and link capacity.}
\end{figure}

\begin{figure}[htbp]
	\centering
	\subfigure[Five types of flow distributions. F1: ``FacebookKeyValue\_Sampled"; F2: ``Google\_SearchRPC"; F3: ``Google\_AllRPC"; F4: ``Facebook\_HadoopDist\_All"; F5: ``DCTCP".]{
		\includegraphics[width=7cm]{./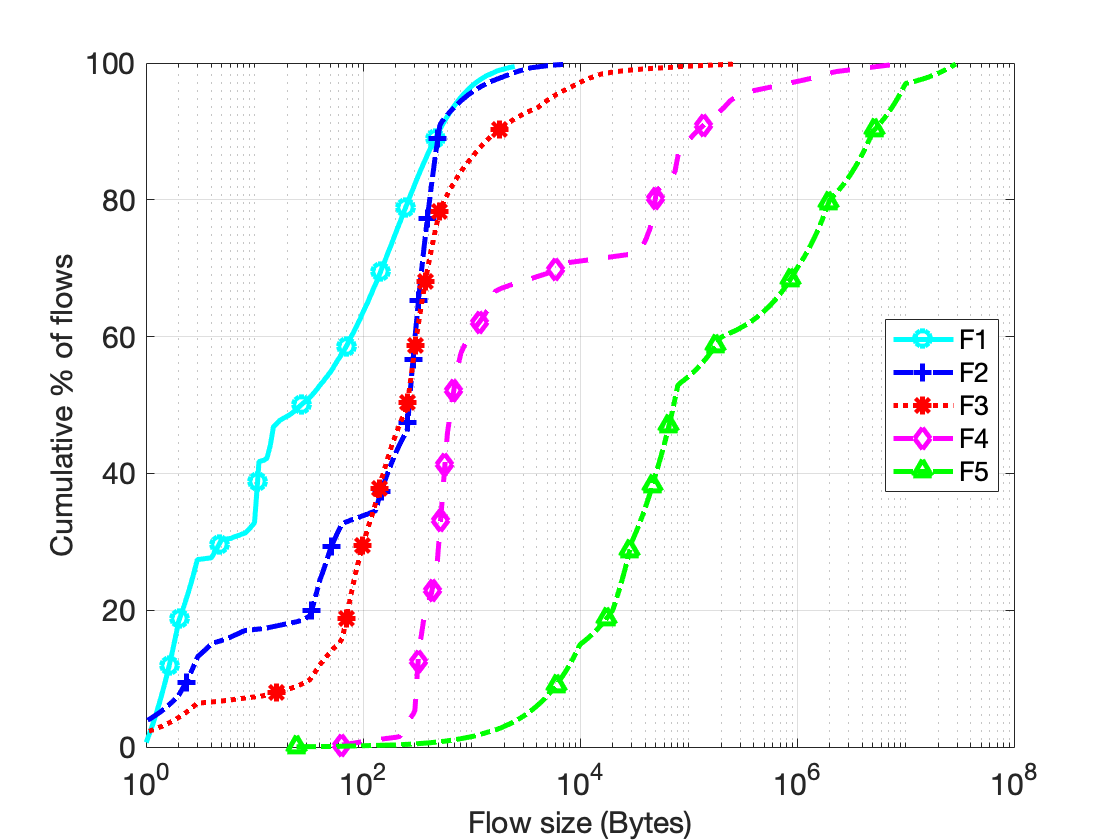}
\label{fig: flowdist}   
	}
	\quad
	
	\subfigure[The distribution of our generated flow sizes.]{
		\includegraphics[width=7cm]{./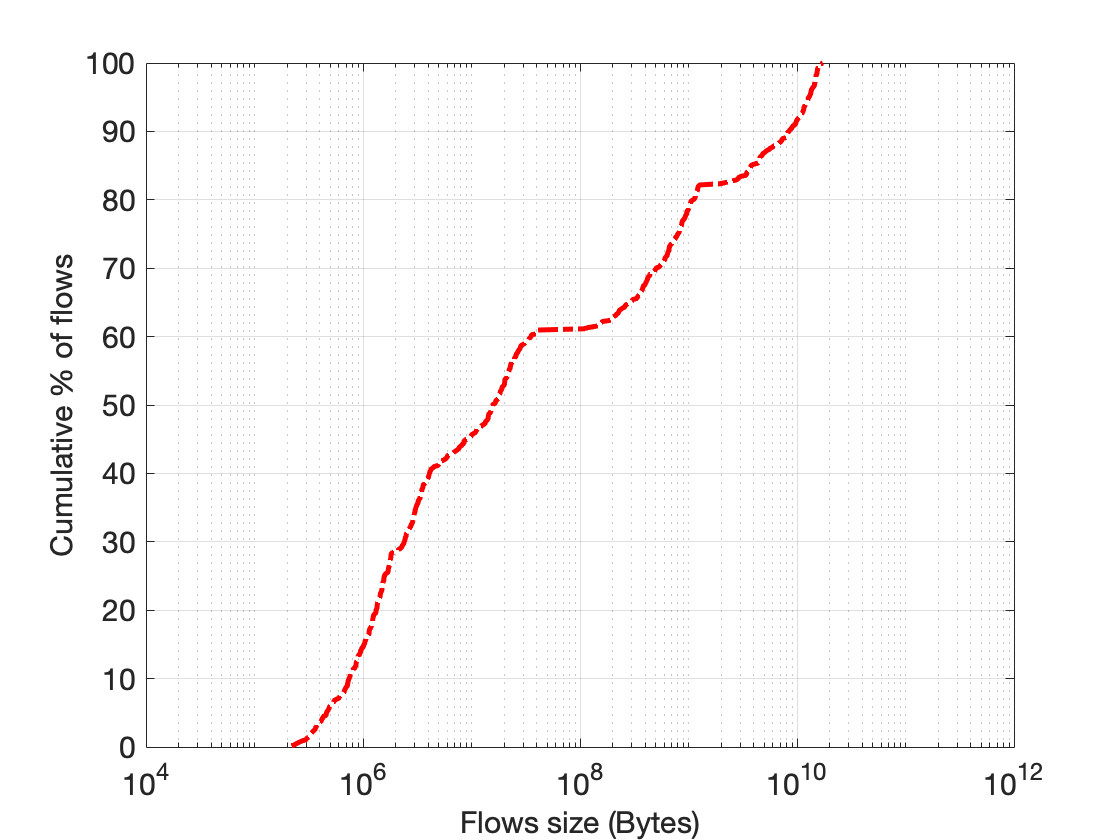}
		\label{fig: sdist}   
	}
\caption{Illustrations of flow distributions.}
\end{figure}

\subsection{Parameter Settings}
\label{subsec: ps}
 Our simulation is performed in MATLAB on a PC with an Intel Core i5 at 2.3GHz and 8GB of memory. For the parameters in our objective function, $\alpha$ and $\beta$ are set to be $500$ and $0.05$, respectively. 
The maximal iteration number is set to be $1500$. In ADMM for the convex scenario of MOPC \eqref{eq: MRproblem}, that is the case when $w_k = P_k, \forall k$, we adjust the parameter $\rho$ according to the Section 3.4.1 in \cite{boyd2011distributed}, and set the parameter $\mu =1.1 \times \rho \| \boldsymbol{R} \|_2^2$. In the $\boldsymbol{ z}$-update, we take an additional step length with $\gamma = 1.618$, which demonstrates better convergence performance. 
The stopping criteria are
\begin{itemize}
	\item primal residual ($p\_res$): 
	\begin{align*}
		 \| \boldsymbol{y}^j-\boldsymbol{Rx}^j\|_2 \leq \sqrt{L}\epsilon^{\text{abs}} + \epsilon^{\text{rel}} \max\{\| \boldsymbol{y}^j \|_2, \| \boldsymbol{R}\boldsymbol{x}^j \|_2 \},
	\end{align*}
	\item dual residual ($d\_res$): 
	\begin{align*}
		\| \rho \boldsymbol{R}^\top(\boldsymbol{y}^j -\boldsymbol{y}^{j-1} )   \|_2 \leq \sqrt{P}\epsilon^{\text{abs}} + \epsilon^{\text{rel}}  \| \boldsymbol{R}^\top\boldsymbol{z}^j \|_2,
	\end{align*}
	\item constraint violation ($vio$):
	\begin{align*}
		\| \max(\boldsymbol{Rx-c},\boldsymbol{0})\|_2/ \max(\sqrt{L},\| \boldsymbol{c}\|_2)\leq \epsilon^{tol},
	\end{align*}
\end{itemize}
where we set $\epsilon^{\text{abs}} = 10^{-4}, \epsilon^{\text{rel}} = 10^{-4}$, and
$\epsilon^{\text{tol}} = 10^{-10}$.

In ADMM for the non-convex scenario of MOPC \eqref{eq: MRproblem}, that is $\exists k, w_k < P_k$, due to non-convexity, we increase
$\rho$ during the iteration procedure and set $\mu = 1.1 \times \rho \|
\boldsymbol{R} \|_2^2$. The stopping criteria are
\begin{itemize}
	\item primal residual ($p\_res$): 
	\begin{align*}
	\| \boldsymbol{y}^j-\boldsymbol{Rx}^j\|_2 /\max(\sqrt{L}, \| \boldsymbol{y}^j\|_2)\leq \epsilon^{tol1},
	\end{align*}
	\item $\boldsymbol{y}$ difference ($y\_dif$): 
	\begin{align*}
\|  \boldsymbol{y}^j-\boldsymbol{y}^{j-1}\|_2/\max(\sqrt{L}, \| \boldsymbol{y}^{j-1}\|_2) \leq \epsilon^{tol1},
	\end{align*}
	\item constraint violation ($vio$):
	\begin{align*}
	\| \max(\boldsymbol{Rx-c},\boldsymbol{0})\|_2/ \max(\sqrt{L},\| \boldsymbol{c}\|_2)\leq \epsilon^{tol2},
	\end{align*}
\end{itemize}
where $\epsilon^{tol1}=10^{-4}$, and $\epsilon^{tol2}=10^{-10}$. We utilize 
\begin{itemize}
	\item the total completion time, namely delay,  
	\begin{align*}
		delay = \sum_{k=1}^{K} \frac{\boldsymbol{s}_k}{ \| \boldsymbol{x}_k  \|_1 },
	\end{align*}
	\item the proportional fairness,
		\begin{align*}
 fairness =  \beta \sum_{k=1}^{K} \log( \| \boldsymbol{x}_k  \|_1 ) ,
	\end{align*}
	\item the maximal (say the worst-case) link utilization ratio,
			\begin{align*}
load = \max_l \frac{\boldsymbol{R}[l] \boldsymbol{x} }{\boldsymbol{c}_l},
	\end{align*}
	\item the objective function value in MOPC problem,
	\begin{align*}
obj = delay  - fairness + \alpha \times load,
	\end{align*}
\end{itemize} as performance measures. Next, we present the experimental results in detail.
\subsection{Experimental Results}
We test several baselines listed in the following:
\begin{itemize}
	\item $cvxCG$: The conditional gradient method for our original problem \eqref{eq: MRproblem} without $\ell_0$-norm constraints.
	\item $rlxCG$: The conditional gradient method for the convex relaxation model in \eqref{eq: cvr3}.
	\item $cvxCG\_c$: Projecting the solution to $cvxCG$ to meet the $\ell_0$-norm constraints.
	\item $rlxCG\_c$: Projecting the solution to $rlxCG$ to meet the $\ell_0$-norm constraints.
	\item $cvxMOPC$: Our ADMM for our original problem \eqref{eq: MRproblem} without $\ell_0$-norm constraints (because $w_k = P_k, \; \forall k$).
	\item $ncvMOPC$: Our ADMM for the non-convex problem \eqref{eq: MRproblem}.
\end{itemize}

As shown in Table \ref{tab: res}, we verify the advantages of our method. First, the objective function value $obj$ and
other three performance measures $delay$, $fairness$, and $load$ in the
convex relaxation model $rlxCG$ are nearly the same as those in the original
convex model $cvxCG$. Therefore, the convex relaxation technique in \cite{bi2016network} does not work in our problem in \eqref{eq: MRproblem}.
Second, comparing the results of $cvxCG\_c$ with those of $rlxCG\_c$ shows that
there are indeed some improvements of this convex relaxation technique after
projecting its solution to meet the path cardinality constraints. In addition, the solution of our non-convex ADMM performs the best compared with other two projected solutions. \figref{fig: admmconverge} illustrates the convergence procedure of our non-convex ADMM. It can be observed that our method $ncvMOPC$ gets significant performance gain in $obj$, $delay$ as well as $fairness$ with tiny cost in $load$.

\begin{table}[tbhp] 
	\renewcommand\arraystretch{1.2}
	\caption{Results of MOPC using ADMM VS convex relaxation.}
	\centering
	\label{tab: res}
	
\begin{tabular}{|p{1.5cm}|p{0.7cm}|p{1.0cm}|p{1.2cm}|p{0.7cm}|} \hline  
	Scheme & Obj & Delay & Fairness & Load \\ \hline 
	cvxMOPC  & 121 & 329 & 561 & 0.70    \\\hline 
	cvxCG  & 119 & 322 & 564 & 0.72    \\\hline 
	rlxCG  & 119 & 321 & 564 & 0.72    \\\hline 
	cvxCG\_c  & 391 & 584 & 554 & 0.72    \\\hline 
	rlxCG\_c  & 369 & 562 & 555 & 0.72    \\\hline 
	ncvMOPC  & \textbf{169} & \textbf{361} & \textbf{562} & 0.74    \\\hline 
\end{tabular} 

\end{table}

	\begin{figure}[]
	\centering  
	\includegraphics[scale=0.35]{./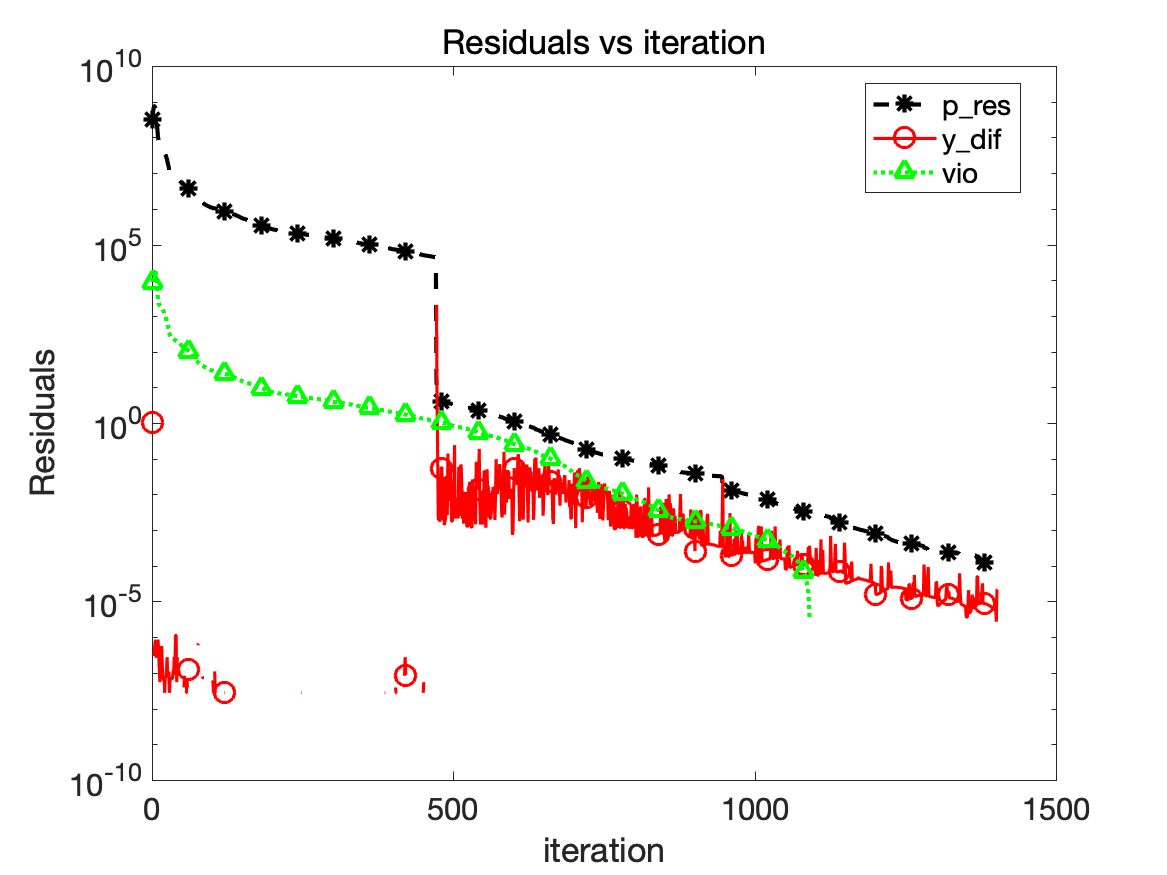} 
	\caption{The convergence procedure of our non-convex ADMM for MOPC.}  
	\label{fig: admmconverge}   
\end{figure}

\section{Summary}
\label{Sec: 7}
This paper investigates a multi-objective bandwidth optimization and path selection problem for
any given path cardinality constraint, which is in general a highly non-convex problem. An ADMM-based algorithm is proposed. The algorithm is
simple and easy to be implemented, whose subproblems include finding the maximal root of a single-cubic equation and a one-dimensional optimization problem. We validate the effectiveness of our algorithm both theoretically and experimentally. 


\appendix

\section{Proofs}
\subsection{Proof of Lemma \ref{lemma: xupdateinfty}}
\label{Prooflemma2}
\begin{proof}
	We reformulate \eqref{eq: NUMx-subproblemblock2} as
	\begin{align}
	\boldsymbol{x}^{j+1}_k= \mathop{\arg\min}_{ \boldsymbol{x}_k \geq \boldsymbol{0}} \quad  -U_k( \sum_{i=1}^{P_k} x_{k,i} ) +  \dfrac{\mu}{2}\| \boldsymbol{x}_k - \boldsymbol{\nu}_k^j  \|^2.
	\label{x-subproblemblock1}
	\end{align}
	The Lagrangian function of the optimization problem in \eqref{x-subproblemblock1} is 
	\begin{align*}
	L(\boldsymbol{x}_k;\boldsymbol{\lambda}_k) = -U_k( \sum_{i=1}^{P_k} x_{k,i} ) +  \dfrac{\mu}{2}\| \boldsymbol{x}_k - \boldsymbol{\nu}_k^j  \|^2 - \boldsymbol{\lambda}_k^\top \boldsymbol{x}_k,
	\end{align*}
	where $\boldsymbol{\lambda}_k = (\lambda_{k,1},\lambda_{k,2},\cdots,\lambda_{k,P_k})^\top \geq \boldsymbol{0}$ is a Lagrangian multiplier vector.
	From the KKT conditions of \eqref{x-subproblemblock1}, we can get 
	\begin{subequations}
		\begin{align}
		& x_{k,i}^{j+1} =  0 \quad \text{or}\quad \nu_{k,i}^j + \zeta, \quad \forall \, i, \label{eq: xupdateinfty1} \\
		& \boldsymbol{x}_{k}^{j+1} \geq \boldsymbol{ 0} ,  \label{eq: xupdateinftypos} \\
		& \mu \zeta = U'_k\left( \sum_{i=1}^{P_k} x_{k,i}^{j+1} \right), \label{eq: xupdateinfty2} 
		\end{align}
	\end{subequations}
where $\zeta$ is some positive number.
	Next, we will show that \eqref{eq: xupdateinfty1} and \eqref{eq: xupdateinftypos} are actually 
	\begin{align}
	x_{k,i}^{j+1} =  \max(0, \nu_{k,i}^j + \zeta), \quad \forall \, i. \label{eq: xupdateinftyst1} 
	\end{align}
	Suppose there exists some index $i\, \left(i\in \{1,2,\cdots, P_k\} \right) $ such that
	\begin{align*}
	\nu_{k,i}^j + \zeta > 0 \quad  \text{and} \quad x_{k,i}^{j+1} = 0.
	\end{align*}
	Then the gradient with respect to $\tau$ of the following function
	\begin{align}
	-U_k(t + \tau ) + \frac{\mu}{2}( \tau - \nu_{k,i}^j)^2,
	\label{eq: xupdateinftymono} 
	\end{align}
	where $t$ is the sum of all elements of $ \boldsymbol{x}_k^{j+1}$ except $x_{k,i}^{j+1}$, in $\tau = 0 $ is 
	\begin{align*}
	- U_k'( t + 0 ) + \mu (0 - \nu_{k,i}^j )
	= -\mu \zeta - \mu \nu_{k,i}^j < 0.
	\end{align*}
	It means that if we increase $ x_{k,i}^{j+1}$ from 0, our objective function value would decrease, 
	which contradicts the fact that $x_{k,i}^{j+1} = 0$ is optimal. 
	Then we have $x_{k,i}^{j+1} =  \nu_{k,i}^j + \zeta $, and readily get 
	\begin{align*}
	x_{k,i}^{j+1} =  \max(0, \nu_{k,i}^j + \zeta), \quad \forall \, i.
	\end{align*}
	Combining \eqref{eq: xupdateinfty2} with \eqref{eq: xupdateinftyst1} yields
	\begin{align}
	\mu \zeta = U'_k \left( \sum_{i=1}^{P_k} \max(0, \nu_{k,i}^j + \zeta) \right) . \label{eq: xupdateinfty3}
	\end{align}
	In the following, we will show how to get the value of $\zeta$. Without loss of generality, suppose the elements of $\boldsymbol{ \nu}_k^j$ are in descending order.
	Since the right hand side of \eqref{eq: xupdateinfty3} decreases with $\zeta$
	increasing and the left hand side increases with $\zeta$ increasing, we can try every
	positive $\zeta = -\nu_{k,i}^j \, (i=2,\cdots, P_k) $ to determine the corresponding
	interval in which $\zeta$ satisfies \eqref{eq: xupdateinfty3}. Specifically, if there exists some smallest index $i'$ $(i' \in {2,\cdots, P_k}) $ such that
		\begin{align}
	\mu \zeta \geq U'_k \left( \sum_{i=1}^{P_k} \max(0, \nu_{k,i}^j + \zeta) \right), \label{eq: xupdateinftynn}
	\end{align}
	where $\zeta = -\nu_{k,i'}^j $,
	then the corresponding interval is $(-\nu_{k,i'-1}^j, -\nu_{k,i'}^j ]$.
The corresponding interval is $(-\nu^j_{k,P_k},+\infty)$ otherwise.
	Then, we can obtain $\zeta$ through finding the maximal root of a single-variable cubic equation
			\begin{align*}
	\mu \zeta \left(\sum_{i=1}^{i'-1} (\nu^j_{k,i}+\zeta) \right)^2 = \beta \left(\sum_{i=1}^{i'-1} (\nu^j_{k,i}+\zeta) \right) + s_k,
	\end{align*}
	or
				\begin{align*}
	\mu \zeta \left(\sum_{i=1}^{P_k} (\nu^j_{k,i}+\zeta) \right)^2 = \beta \left(\sum_{i=1}^{P_k} (\nu^j_{k,i}+\zeta) \right) + s_k,
	\end{align*}
	otherwise.

\end{proof}
\subsection{Proof of Lemma \ref{lemma: xupdate1}}
\label{Prooflemma1}
\begin{proof}
	If $w_k=1$, $\boldsymbol{x}_k^{j+1}$ can have one and only one non-zero
    element. Next, we would first show the index of this non-zero element and
    then determine its corresponding value. Suppose there exist $i, i'\, (i,i' \in
    \{1,\cdots,P_k  \})$ such that $\nu_{k,i'}^j< \nu_{k,i}^j$, then for $\forall a >0$, we have
	\begin{align*}
	& -U_k(a) + \frac{\mu}{2}(a-\nu_{k,i'}^j)^2 + \frac{\mu}{2}(\nu_{k,i}^j)^2	\\
	&> -U_k(a) + \frac{\mu}{2}(a-\nu^j_{k,i})^2 + \frac{\mu}{2}(\nu^j_{k,i'})^2,
	\end{align*}
	meaning that $( \overbrace{ 0,\cdot\cdot\cdot,0
    }^{i'-1},a,0,\cdot\cdot\cdot,0)^\top$ is less optimal than $( \overbrace{
    0,\cdot\cdot\cdot,0 }^{i-1},a,0,\cdot\cdot\cdot,0)^\top$. Hence, the index of the non-zero element is $i$ $(i \in \{1,\cdots,P_k  \})$ that makes $\nu^j_{k,i}$ be maximal. Now problem \eqref{x-subproblemblock} is converted into a single-variable optimization problem:
	\begin{align}
	x^{j+1}_{k,i}  &= \mathop{\arg\min}_{ a } -U_k(a) + \dfrac{\mu}{2} ( a -  \nu^j_{k,i} )^2.
	\label{x-update_single}
	\end{align}
	To make the gradient of problem \eqref{x-update_single} be zero, we obtain a single-variable cubic equation
	\begin{align*}
	\mu a^3 - \mu \nu^j_{k,i}   a^2 - \beta a -s_k/K  =0, \quad   \forall k.
	\end{align*}
	$x_{k,i}^{j+1}$ is the maximal real positive root (there is at least one real positive root because $- s_k/K < 0$).
\end{proof}

\subsection{Proof of Lemma \ref{lemma: xupdate2}}
\label{Prooflemma3}
\begin{proof}
	First, when $1<w_k<P_k$, we prove the solution to \eqref{x-subproblemblock} satisfies that if $i > w_k $, $x_{k,i}^{j+1} = 0$ by contradiction. 
	
	Suppose one solution $\tilde{\boldsymbol{ x}}_k^{j+1}$ satisfies that there exists $i > w_k$ such that $\tilde{x}_{k,i}^{j+1} = d > 0$, then there exists $i' \leq w_k$ such that $\tilde{x}_{k,i'}^{j+1} = 0$, i.e. 
		$$ \tilde{\boldsymbol{ x}}_k^{j+1} = (\tilde{x}_{k,1}^{j+1},\cdots, {\tilde{x}_{k,i'}^{j+1}=0}, \cdots, {\tilde{x}_{k,i}^{j+1}=d},\cdots, \tilde{x}_{k,P_k}^{j+1} )^\top.$$
		We can construct another feasible point by switching the values of $\tilde{x}_{k,i'}^{j+1}$ and $\tilde{x}_{k,i}^{j+1}$. Then we get 
		$$ \hat{\boldsymbol{x}}_k^{j+1} = (\tilde{x}_{k,1}^{j+1},\cdots, {\hat{x}_{k,i'}^{j+1}=d}, \cdots, {\hat{x}_{k,i}^{j+1}=0},\cdots, \tilde{x}_{k,P_k}^{j+1} )^\top.$$
		Because of $\nu^j_{k,i} \leq \nu^j_{k,i'} $, we have
		 \begin{align*}
			&-U_k(t + d ) + \frac{\mu}{2}(d- \nu^j_{k,i'})^2 +  \frac{\mu}{2}( \nu^j_{k,i})^2  \nonumber\\ &\leq -U_k(t + d ) + \frac{\mu}{2}( \nu^j_{k,i'})^2 +  \frac{\mu}{2}( d - \nu^j_{k,i})^2,
			\end{align*}
		where $t$ is the sum of all elements of $\boldsymbol{ x}_{k}^{j+1}$ except $x_{k,i}^{j+1}$ and $x_{k,i'}^{j+1}$. 
        Hence, $\hat{\boldsymbol{x}}_k^{j+1}$ is better than
        $\tilde{\boldsymbol{ x}}_k^{j+1}$. By repeating this procedure, we get our conclusion $x_{k,i}^{j+1} = 0, \; i > w_k$.
		
	Now problem \eqref{x-subproblemblock} can be transformed to 
	\begin{align}
	& (x^{j+1}_{k,1}, \cdots,x^{j+1}_{k,w_k} )^\top = \nonumber \\
	&  \mathop{\arg\min}_{ x_{k,i} \geq 0, i \leq w_k} \quad  -U_k( \sum_{i=1}^{w_k} x_{k,i} ) +  \dfrac{\mu}{2} \sum_{i=1}^{w_k} ( x_{k,i} - \nu^j_{k,i} )^2.
	\label{eq: lemmax22}
	\end{align}
	From similar analysis in the proof of lemma \ref{lemma: xupdateinfty}, the solution to \eqref{eq: lemmax22} satisfies \eqref{eq: lemmaxs1}
    and \eqref{eq: lemmaxs2}. This completes the proof. 
	
\end{proof}

\subsection{Proof of Lemma \ref{lemma: y-subproblem}}
\label{Prooflemma4}
\begin{proof}
	Consider problem \eqref{eq: y-updatesub}. For a fixed $t$, the optimal $\boldsymbol{y}$ is shown in \eqref{eq: ysubsolution}.
	If $t^*$ is the optimal solution to
	\begin{align*}
	\min_t  \Phi(t) = \phi(t,\boldsymbol{y}(t))  \quad s.t. \quad t \geq 0,
	\end{align*}
	then $\boldsymbol{y}(t^*)$ is the optimal solution to problem  \eqref{eq: scaleadmmpro2} (since for any other feasible $t'$ and $\boldsymbol{y}'$, we have 
	\begin{align*}
		\phi(t',\boldsymbol{y}') \geq \phi(t',\boldsymbol{y}(t')) = \Phi(t')\geq \Phi(t^*) = \phi(t^*,\boldsymbol{y}(t^*)).
	\end{align*}
	 By noticing that the function 
	\begin{align*}
	\Phi(t)= \phi(t,\boldsymbol{y}(t)) = \inf_{\boldsymbol{y}\in \mathcal{C}} \, \phi(t,\boldsymbol{y}),
	\end{align*}
	is convex in $t$ (because $\phi$ is convex in $(t,\boldsymbol{y})$, and
    $\mathcal{C}$ is a convex nonempty set), and when $t>1$, $\Phi(t)$ will
    increase. This concludes the proof.
\end{proof}

\subsection{Proof of Proposition \ref{propo: alg2albound}}
\label{Proofpropo1}
\begin{proof}
	We recast the augmented Lagrangian function \eqref{eq: algconv} into another form:
	\begin{align*}
	L_\rho(\boldsymbol{x}^j,\boldsymbol{y}^j;\boldsymbol{z}^j)  = f(\boldsymbol{x}^j) + g(\boldsymbol{y}^j) + \mathcal{I}_{ \mathcal{X} }(\boldsymbol{x}^j) 	+ \mathcal{I}_{ \mathcal{Y}}(\boldsymbol{y}^j)  \\
	+ \frac{\rho}{2} \| \boldsymbol{y}^j-\boldsymbol{R}\boldsymbol{x}^j + \frac{\boldsymbol{z}^j}{\rho} \|^2 - \frac{\| \boldsymbol{ z}^j \|^2}{2\rho}.
	\end{align*}
	Since when $\boldsymbol{x}^j_{k,i} \rightarrow \infty\,(k\in \{1,2,\cdots,K\}, \forall i \in \{1,2,\cdots,P_k \})$, we have $ \| \boldsymbol{Rx}^j\| \rightarrow +\infty$, which combining with the boundedness of $\{\boldsymbol{y}^j \} $ and $\{\boldsymbol{z}^j \}$ implies that $f(\boldsymbol{x}^j) + \frac{\rho}{2} \| \boldsymbol{y}^j-\boldsymbol{R}\boldsymbol{x}^j + \frac{\boldsymbol{z}^j}{\rho} \|^2 \rightarrow +\infty$. Besides, other parts of  $	L_\rho(\boldsymbol{x}^j,\boldsymbol{y}^j;\boldsymbol{z}^j) $ are all bounded below. We get the lower boundedness of $\{L_\rho(\boldsymbol{x}^j,\boldsymbol{y}^j;\boldsymbol{z}^j)\}$.
\end{proof}

\subsection{Proof of Lemma \ref{lemma: weakdc}}
\label{Prooflemma5}
\begin{proof}
	For simplicity, we denote $h(\boldsymbol{x}) = \frac{\rho}{2} \| \boldsymbol{y}^j-\boldsymbol{Rx}+ \boldsymbol{z}^j/\rho  \|^2 $,
	then the $\boldsymbol{x}$-update \eqref{x-update1} is actually
	\begin{align*}
	\boldsymbol{x}^{j+1} = \mathop{\arg\min}_{ \boldsymbol{x} \in \mathcal{X} } f(\boldsymbol{x}) + \langle \nabla h(\boldsymbol{x}^j), \boldsymbol{x} - \boldsymbol{x}^j \rangle  + \frac{\mu}{2}\| \boldsymbol{x} - \boldsymbol{x}^j \|^2,
	\end{align*}
	which yields that 	
	\begin{align}
	& f(\boldsymbol{x}^j) \geq  \nonumber \\
	& f(\boldsymbol{x}^{j+1}) + \langle \nabla h(\boldsymbol{x}^j), \boldsymbol{x}^{j+1}-\boldsymbol{x}^j \rangle + \frac{\mu}{2}\| \boldsymbol{x}^{j+1} - \boldsymbol{x}^j \|^2 .
	\label{eq: convergexsub1}
	\end{align}
	Since $ \| \nabla h(\boldsymbol{x}) - \nabla h(\boldsymbol{x}')  \| \leq \rho \| \boldsymbol{R} \|_2^2 \| \boldsymbol{x} - \boldsymbol{x}'  \| , \forall \boldsymbol{x}, \boldsymbol{x}' $, from Lemma 1 in \cite{bolte2014proximal}, we have 
	\begin{align}
	& h(\boldsymbol{x}^{j+1}) \leq h(\boldsymbol{x}^j) \nonumber \\
	& + \langle \nabla h(\boldsymbol{x}^j), \boldsymbol{x}^{j+1}-\boldsymbol{x}^j  \rangle + \frac{\rho \| \boldsymbol{R} \|_2^2}{2} \| \boldsymbol{x}^{j+1}  - \boldsymbol{x}^j \|^2.
	\label{eq: convergexsub2}
	\end{align}
	Summing up \eqref{eq: convergexsub1} and \eqref{eq: convergexsub2} gives
	\begin{align}
	&L_\rho(\boldsymbol{x}^j, \boldsymbol{y}^j; \boldsymbol{z}^j) - L_\rho(\boldsymbol{x}^{j+1},\boldsymbol{y}^j; \boldsymbol{z}^j)  \nonumber \\
	&= f(\boldsymbol{x}^j) + h(\boldsymbol{x}^j) -f(\boldsymbol{x}^{j+1}) - h(\boldsymbol{x}^{j+1})  \nonumber \\
	&\geq \frac{\mu - \rho \| \boldsymbol{R} \|_2^2}{2} \| E_{\boldsymbol{x}}^{j+1}  \|^2. 	\label{eq: weakdcx}
	\end{align}
	The $\boldsymbol{y}$-update \eqref{eq: scaleadmmpro2} is
	\begin{align*}
	\boldsymbol{y}^{j+1} = \mathop{\arg\min}_{\boldsymbol{0} \leq \boldsymbol{y}\leq \boldsymbol{c}}  g(\boldsymbol{y})+\dfrac{\rho}{2} \| \boldsymbol{y} - \boldsymbol{Rx}^{j+1} +  \boldsymbol{z}^j /\rho  \|^2,
	\end{align*}
	whose optimality condition deduces that $ \exists\, \boldsymbol{g}^{j+1} \in  \partial g(	\boldsymbol{y}^{j+1})$ such that
	\begin{align}
	\left( \boldsymbol{g}^{j+1}+ \rho ( \boldsymbol{y}^{j+1} -\boldsymbol{Rx}^{j+1} + \boldsymbol{z}^j /\rho ) \right) ^\top ( \boldsymbol{y}' -\boldsymbol{y}^{j+1} ) \geq 0,
	\label{eq: convergeopt}
	\end{align}
	where $\boldsymbol{0}  \leq \boldsymbol{y}' \leq \boldsymbol{c}$.
	Then we have
	\begin{align}
	& L_\rho(\boldsymbol{x}^{j+1}, \boldsymbol{y}^j; \boldsymbol{z}^j) - L_\rho(\boldsymbol{x}^{j+1},\boldsymbol{y}^{j+1}; \boldsymbol{z}^j)\nonumber \\
	& = g(	\boldsymbol{y}^j) - g(	\boldsymbol{y}^{j+1} ) \nonumber \\
	& + \frac{\rho}{2} \|   \boldsymbol{y}^j - \boldsymbol{Rx}^{j+1} +  \boldsymbol{z}^j /\rho    \|^2 - \frac{\rho}{2} \|   \boldsymbol{y}^{j+1} - \boldsymbol{Rx}^{j+1} +  \boldsymbol{z}^j /\rho    \|^2 \nonumber \\
	& = g(	\boldsymbol{y}^j) - g(	\boldsymbol{y}^{j+1} )  \nonumber\\
	& + \frac{\rho}{2} \| \boldsymbol{y}^{j} - \boldsymbol{y}^{j+1}  \|^2 + \rho \langle  \boldsymbol{y}^{j+1} - \boldsymbol{Rx}^{j+1} + \boldsymbol{z}^j/\rho,  \boldsymbol{y}^{j} - \boldsymbol{y}^{j+1}  \rangle \nonumber\\
	& \geq  (\boldsymbol{g}^{j+1})^\top ( 	\boldsymbol{y}^j - 	\boldsymbol{y}^{j+1}) \nonumber \\
	&+ \frac{\rho}{2} \| \boldsymbol{y}^{j} - \boldsymbol{y}^{j+1}  \|^2 + \rho \langle  \boldsymbol{y}^{j+1} - \boldsymbol{Rx}^{j+1} + \boldsymbol{z}^j/\rho,  \boldsymbol{y}^{j} - \boldsymbol{y}^{j+1}  \rangle  \nonumber\\
	& \overset{\eqref{eq: convergeopt}}{\geq} \frac{\rho}{2} \| E_{\boldsymbol{y}}^{j+1}  \|^2, \label{eq: weakdcy}
	\end{align}
	where the second equality is derived from 
	\begin{align*}
		\|\boldsymbol{v}\|^2 - \|\boldsymbol{w}\|^2 = \|\boldsymbol{v} -\boldsymbol{w}\|^2 + 2 \langle \boldsymbol{w},\boldsymbol{v} -\boldsymbol{w}  \rangle, \; \forall \boldsymbol{ v},\boldsymbol{w} \in \mathbb{R}^L.
	\end{align*}
	For the $\boldsymbol{z}$-update, we have
	\begin{align}
	& L_\rho(\boldsymbol{x}^{j+1}, \boldsymbol{y}^{j+1}; \boldsymbol{z}^j) - L_\rho(\boldsymbol{x}^{j+1},\boldsymbol{y}^{j+1}; \boldsymbol{z}^{j+1}) \nonumber \\
	& = (\boldsymbol{z}^j - \boldsymbol{z}^{j+1})^\top (\boldsymbol{y}^{j+1} - \boldsymbol{Rx}^{j+1}) \nonumber \\
	& =  - \frac{1}{\rho} \| E_{\boldsymbol{z}}^{j+1} \|^2.\label{eq: weakdcz}
	\end{align}
	By summing up \eqref{eq: weakdcx}, \eqref{eq: weakdcy}, and \eqref{eq: weakdcz}, we complete the proof.
\end{proof}

\subsection{Proof of Theorem \ref{thm:conv}}
\label{Prooftheorem1}
\begin{proof}
	From $\sum_{j=1}^{\infty} \| E_{\boldsymbol{z}}^{j+1} \|^2 < \infty$, we can get $\lim\limits_{j \rightarrow +\infty} E_{\boldsymbol{z}}^{j+1} = \boldsymbol{0}$.
	Next, we first show that
	\begin{align}
	\lim\limits_{j \rightarrow +\infty} E_{\boldsymbol{x}}^{j+1} = \boldsymbol{ 0}\; \text{and}\;  		\lim\limits_{j \rightarrow +\infty} E_{\boldsymbol{y}}^{j+1} = \boldsymbol{ 0}.
	\label{eq: xyinfty}
	\end{align}
	Since $\{ L_\rho(\boldsymbol{x}^{j}, \boldsymbol{y}^{j}; \boldsymbol{z}^j) \}$ is bounded below and $\sum_{j=1}^{\infty} \| E_{\boldsymbol{z}}^{j+1} \|^2 < \infty $, it follows from \eqref{eq: weakdc} that 
	\begin{align}
	\sum_{j=1}^{\infty} \| E_{\boldsymbol{x}}^{j+1}\|^2 < \infty,\, \sum_{j=1}^{\infty} \| E_{\boldsymbol{y}}^{j+1} \|^2 < \infty,
	\end{align}
	which implies \eqref{eq: xyinfty}. The boundedness of  $ \{
    \boldsymbol{x}^j\}_{j=1}^{+\infty}$ can be obtained from the boundedness of
    $ \{ \boldsymbol{y}^j \}_{j=1}^{+\infty}$, $ \{ \boldsymbol{z}^j \}_{j=1}^{+\infty}$ as well as  the $\boldsymbol{
    z}$-update in \eqref{eq: scaleadmmpro3}. In addition, because of the
    $\boldsymbol{ z}$-update in \eqref{eq: scaleadmmpro3}, we can have 
	\begin{align}
	\lim\limits_{j\rightarrow\infty}\boldsymbol{y}^j - \boldsymbol{ R}\boldsymbol{x}^j = 0. \label{eq: yeqrx}
	\end{align}
	
	For any limiting point $ \{ \overline{\boldsymbol{x}},\overline{ \boldsymbol{y}} \}$ of the sequence, since $\{\boldsymbol{x}^j,\boldsymbol{y}^j,\boldsymbol{z}^j  \}$ is bounded, there is a sequence $\{\boldsymbol{x}^{j_i},\boldsymbol{y}^{j_i},\boldsymbol{z}^{j_i} \}$ whose limiting point is $ \{ \overline{\boldsymbol{x}},\overline{ \boldsymbol{y}} ,\overline{ \boldsymbol{z}}\}$. 
	Clearly, from \eqref{eq: yeqrx}, we can get 
	\begin{align*}
	\overline{\boldsymbol{y}} = \boldsymbol{ R} \overline{\boldsymbol{x}},
	\end{align*}
	which is in fact \eqref{eq: KKTcon} in KKT conditions. Taking limit on both sides of \eqref{eq: optx}, \eqref{eq: opty2} and applying the Remark 1 in \cite{bolte2014proximal}, we obtain \eqref{eq: KKTx}, \eqref{eq: KKTy}  in the KKT conditions.
\end{proof}
%
%
%

%
%

\ifCLASSOPTIONcaptionsoff
  \newpage
\fi


\bibliographystyle{IEEEtran}
\bibliographystyle{mybstfile}
\bibliography{IEEEabrv,mybibfile}

\end{document}